\newcommand{\Rmnum}[1]{\expandafter\@slowromancap\romannumeral #1@}
\newtheorem{theorem}{Theorem}[section]
\newtheorem{corollary}[theorem]{Corollary}
\newtheorem{lemma}[theorem]{Lemma}
\newtheorem{proposition}[theorem]{Proposition}
\theoremstyle{definition}
\theoremstyle{remark}
\newtheorem{remark}{Remark}[section]
\def\ps@mystyles{    \def\@evenhead{\hfil{\sl\leftmark}\hfil}    \def\@oddhead{\hfil{\sl\rightmark}\hfil}    }
\begin{document}
\title{The current distribution of the multiparticle hopping asymmetric diffusion model}

\author{\textbf{Eunghyun Lee}\\ \small{\textit{Department of Mathematics and Statistics}}
                             \\ \small{\textit{University of Helsinki}}
                             \date{}            \\ \small{\textit{Email:eulee@mappi.helsinki.fi}}}
\maketitle

\begin{abstract} \noindent In this paper we treat the \textit{multiparticle hopping asymmetric diffusion model} (MADM) on $\mathbb{Z}$ introduced by Sasamoto and Wadati in 1998. The transition probability of the MADM with $N$ particles is provided by using the Bethe ansatz. The transition probability is expressed as the sum of  $N$-dimensional contour integrals of which contours are circles centered at the origin with restrictions on their radii. By using the transition probability we find $\mathbb{P}(x_m(t) =x)$, the probability that the $m$th particle from the left is at $x$ at time $t$.  The probability $\mathbb{P}(x_m(t) =x)$ is expressed as the sum of $|S|$-dimensional contour integrals over all $S \subset \{1,\cdots,N\}$ with $|S|  \geq m$, and is used to give the current distribution of the system. The mapping between the MADM and the pushing asymmetric simple exclusion process (PushASEP) is discussed.
\end{abstract}
\section{Introduction}\label{intro}
The Bethe ansatz, which is useful to find the eigenvalues and eigenvectors of the Hamiltonian of one-dimensional quantum spin systems, has been one of the main techniques in studying one-dimensional stochastic particle models \cite{Gwa,Lee,Povol,Povol2,Rakos,Sasamoto,Sasamoto2,SasaWada1998Oct,Schutz,TW1}. One direction of studying these models by the Bethe ansatz is  the asymptotic analysis of the current that can be computed from transition probabilities. In this direction the simple exclusion process on $\mathbb{Z}$ has been extensively studied \cite{Lee,Nagao,Rakos,TW1,TW2,TW3,TW4,TW5} and some results are now used to show that  the Kadar-Parisi-Zhang (KPZ)  equation belongs to the KPZ universality class\cite{Amir}.\\ \\
The model we are going to consider in this paper is the \textit{one-dimensional asymmetric diffusion model without exclusion}. This model was also named the \textit{multiparticle hopping asymmetric diffusion model} (MADM) by Sasamoto and Wadati  \cite{SasaWada1998Oct}. The dynamics of the MADM with $N $ particles  is as follows: Each site on $\mathbb{Z}$ is equipped with $2N$ Poisson clocks with rates $pr_n$ and $ql_n~ (p+q=1,~n=1,\cdots,N)$ and all clocks are independent. If a clock with rate $pr_n(ql_n)$  rings at $x$, which is occupied  by at least $n$ particles, then $n$ particles simultaneously jump to $x+1$ ($x-1$). However, if  the number of particle at a site  is less than $n$ when the clock with rate $pr_n(ql_n)$ rings at the site, then nothing happens and the clock resumes.  The rates are not free but are given by the $q$-binomial coefficients
\begin{equation}
 {[n]_{\frac{\lambda}{\mu}}}= \frac{1 - \big(\frac{\lambda}{\mu}\big)^n }{1-\frac{\lambda}{\mu}} = \frac{1}{r_n}\hspace{0.5cm}\label{RightRate}
  \end{equation}
with $q=\frac{\lambda}{\mu}$ and
\begin{equation}
 {[n]_{\frac{\mu}{\lambda}}}= \frac{1 - \big(\frac{\mu}{\lambda}\big)^n }{1-\frac{\mu}{\lambda}} = \frac{1}{l_n} \label{LeftRate1}
  \end{equation}
with $q=\frac{\mu}{\lambda},$
   where $\lambda + \mu=1$ and $\lambda,\mu \neq 0$. These requirements on rates are needed for the Bethe ansatz solvability \cite{Ali,Ali2,SasaWada1998Oct}. In \cite{SasaWada1998Oct}  only one free parameter was involved in the rates  but it is possible to extend to have two free parameters $p$ and $\lambda$ as above \cite{Ali2}. In \cite{Ali2}  the model with two parameters has exclusion property and  the multiparticle hopping in \cite{SasaWada1998Oct} is interpreted as \textit{pushing effect}. In this \textit{pushing} version the model becomes the drop-push model  in the limit $q,\lambda \rightarrow 0$ \cite{Sasamoto2} or more generally the pushing asymmetric simple exclusion process (PushASEP)\footnote{In the PushASEP in \cite{Borodin2}, the \textit{pushing} can occur in only one direction, so we will call it the \textit{one-sided} PushASEP} in the limit $\lambda \rightarrow0$ in \cite{Borodin2}. \\ \\
  The totally asymmetric simple exclusion process (TASEP) and the \textit{one-sided} PushASEP are \textit{determinantal} in the sense that their transition probabilities are expressed as determinants whose entries are contour integrals
  \cite{Schutz, Borodin2}. These transition probabilities are a starting point for the multi-point joint distributions as well as the currents of systems \cite{Borodin2,Nagao,Rakos,Sasamoto}. In the asymmetric simple exclusion process (ASEP) with $N$ particles, the transition probability is not in a determinantal form but it is expressed as a sum of $N$-dimensional contour integrals over all permutations in a symmetric group $\mathbb{S}_N$ \cite{TW1}. Hence, some techniques for determinantal models to study the current of the system                                        are not applicable to the ASEP. In spite of this limitation, Tracy and Widom obtained the current distribution of the ASEP with step initial condition and found the fact that its fluctuation is governed by the GUE Tracy-Widom distribution \cite{TW1,TW2,TW3,TW4,TW5}.  \\ \\
  The main goal of this paper is to take a first step to see  the MADM is also governed by the Tracy-Widom distribution. As the first step, we  provide the transition probability and the current distribution of the MADM with $N$ particles. In fact, the MADM has been expected to belong to the KPZ universality class by studying its energy gap \cite{SasaWada1998Oct} and it is known that the current fluctuation of the \textit{one-sided} PushASEP (equivalently the MADM in the limit $\lambda \rightarrow 0$) with a special initial condition is governed by the Tracy-Widom distribution \cite{Borodin2}. Hence, we expect our results in this paper to provide a starting point to extend a result of the \textit{one-sided} PushASEP to the \textit{two-sided} PushASEP (equivalent to the MADM).  \\ \\
    This paper is organized as follows. In Section \ref{section1} we find the transition probability of the MADM  by using the Bethe ansatz. As in the ASEP \cite{TW1}, the transition probability is given by the sum of $N$-dimensional contour integrals. However, unlike the transition probability of the ASEP in \cite{TW1}, there is a delicate condition on contours  in the MADM.  This difference is due to the difference in the $S$-matrices of the ASEP and the MADM, and the difference also affects the way of obtaining the current distribution. In Section \ref{sectionMth} we provide the probability that the $m$th leftmost particle  is at $x$ at time $t$ in Theorem \ref{main2} and the current distribution of the system in (\ref{maineqcurrent}).    In Section \ref{push} we discuss the relation between the MADM and the \textit{two-sided} PushASEP. The mapping between the MADM and the PushASEP is confirmed by their transition probabilities, and the probability that the $m$th leftmost particle in the \textit{two-sided} PushASEP is at $x$ at time $t$  is provided.
\section{The transition probability of the MADM}\label{section1}
The MADM with $N$ particles is a continuous-time Markov process with a countable state space. If we denote the $m$th leftmost particle's position by $x_m$, then a state of the process can be specified by particles' positions $X = (x_1,\cdots, x_N) \in \mathbb{Z}^N$ with $x_1 \leq \cdots \leq x_N$. We will call the state space
\begin{equation*}
\{(x_1,\cdots, x_N) \in \mathbb{Z}^N : x_1 \leq \cdots \leq x_N\}
\end{equation*}
the \textit{physical region} of the MADM. By the standard technique \cite{Schutz,SasaWada1998Oct,TW1} to use Bethe ansatz we have a differential equation for a function $u(X;t) = u(x_1,\cdots,x_N;t)$ on $\mathbb{Z}^N \times [0, \infty)$
 \begin{eqnarray}
 \frac{d}{dt}~u(X;t) &=& \sum_{i=1}^N\Big[pu(x_1,\cdots,x_{i-1},x_i-1,x_{i+1},\cdots,x_N;t)  \label{master} \\
   & &\hspace{2.5cm}+~ qu(x_1,\cdots,x_{i-1},x_i+1,x_{i+1},\cdots,x_N;t) -~ u(X;t)\Big] \nonumber
\end{eqnarray}
with the condition imposed on the boundary of the physical region
\begin{eqnarray}
 & &u(x_1,\cdots,x_i,x_i-1,\cdots,x_N;t) \label{boundary} \\
 & =&  \mu u(x_1,\cdots,x_i-1,x_i-1,\cdots,x_N;t) + \lambda u(x_1,\cdots,x_i,x_i,\cdots,x_N;t) \nonumber
\end{eqnarray}
for $i=1,\cdots,N$, and the initial condition in the physical region, that is,
\begin{equation}
u(X;0) = \delta_Y(X) \label{IC} ~~~\textrm{for}~~x_1 \leq \cdots \leq x_N,
\end{equation}
where $Y= (y_1,\cdots, y_N)$ is the initial state of the MADM. The solution of (\ref{master}) with (\ref{boundary}) and (\ref{IC}) is the transition probability and we will  denote it by $P_Y(X;t)$. Define the $S$-matrix for the MADM
 \begin{equation*}
S_{\beta\alpha} :=-\frac{\mu + \lambda \xi_{\alpha}{\xi_{\beta}} - \xi_{\alpha}}{\mu + \lambda \xi_{\alpha}{\xi_{\beta}} - \xi_{\beta}}~~(\mu+\lambda =1)
\end{equation*}
and let
\begin{equation}
A_{\sigma} = \prod_{\substack{i<j,\\ \sigma(i) > \sigma(j)}} S_{\sigma(i)\sigma(j)} \label{coeff}
\end{equation}
for the coefficient of the Bethe ansatz solution of (\ref{master}) with (\ref{boundary}). Here, $\sigma$ is a permutation  in a symmetric group $\mathbb{S}_N$ on $\{1,\cdots,N\}$ and the product is over all $S$-matrices that satisfies the given condition  under the product symbol. If there is no $S$-matrix that satisfies the condition, then $A_{\sigma}$ is defined to be 1\footnote{In general, for any product of $S$-matrices with a certain condition, if there is no  $S$-matrix that satisfies the condition, then the product is defined to be 1 in this paper.}. Let
\begin{equation*}
 \varepsilon(\xi_i) = \frac{p}{\xi_i} + q\xi_i -1,
\end{equation*}
which is interpreted as \textit{energy} of the MADM.
\begin{lemma} \label{lemma1}
Let $\lambda + \mu=1 ~(1/2 < \mu < 1)$  and $\alpha < \beta $. Let  $\mathcal{C}_{R_i}$ be a circle centered at 0  with radius $R_i$ in $\mathbb{C}$ and $\xi_i \in \mathcal{C}_{R_i}~(i=\alpha,\beta)$. If ~$1< R_{\alpha} < R_{\beta} < \frac{\mu}{\lambda},$ then
 \begin{equation*}
   R_{\beta}<  \left|\frac{\xi_{\beta} - \mu}{\lambda\xi_{\beta}}\right|.
\end{equation*}
\end{lemma}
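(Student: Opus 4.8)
The plan is to reduce the claimed inequality to a one–variable quadratic estimate in $R_\beta$ alone; notice that $R_\alpha$ and $\xi_\alpha$ do not enter, and the hypothesis $1/2 < \mu < 1$ is used only through the consequences $0 < \lambda < \tfrac12 < \mu$, so that $\mu/\lambda > 1$ and $\mu < 1$. Since $\xi_\beta \in \mathcal{C}_{R_\beta}$ we have $|\lambda\xi_\beta| = \lambda R_\beta > 0$, hence
\begin{equation*}
\left|\frac{\xi_\beta - \mu}{\lambda\xi_\beta}\right| = \frac{|\xi_\beta - \mu|}{\lambda R_\beta},
\end{equation*}
so the assertion $R_\beta < \left|\frac{\xi_\beta-\mu}{\lambda\xi_\beta}\right|$ is equivalent to $\lambda R_\beta^2 < |\xi_\beta - \mu|$.

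Next I would bound $|\xi_\beta - \mu|$ from below uniformly over the circle. By the reverse triangle inequality, $|\xi_\beta - \mu| \ge \big|\,|\xi_\beta| - \mu\,\big| = |R_\beta - \mu|$, and since $R_\beta > 1 > \mu$ this equals $R_\beta - \mu$. Thus it suffices to prove the purely real inequality
\begin{equation*}
\lambda R_\beta^2 < R_\beta - \mu, \qquad \text{that is,} \qquad \lambda R_\beta^2 - R_\beta + \mu < 0.
\end{equation*}

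The final step is to factor the quadratic $f(R) = \lambda R^2 - R + \mu$. Using $\lambda + \mu = 1$ one gets $1 - 4\lambda\mu = (\lambda+\mu)^2 - 4\lambda\mu = (\mu - \lambda)^2$, and since $\mu > \lambda$ the roots are $R = \frac{1 \pm (\mu-\lambda)}{2\lambda}$, which simplify (again via $\lambda + \mu = 1$) to $R = 1$ and $R = \mu/\lambda$. Hence $f(R) = \lambda(R-1)\left(R - \frac{\mu}{\lambda}\right)$, and for $1 < R_\beta < \mu/\lambda$ the two factors have opposite signs, so $f(R_\beta) < 0$. Chaining the estimates, $\lambda R_\beta^2 < R_\beta - \mu \le |\xi_\beta - \mu|$, and dividing by $\lambda R_\beta$ gives the claim.

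There is no serious obstacle here; the only point to watch is that every estimate is strict — the quadratic is strictly negative on the open interval between its two roots, and $R_\beta - \mu$ is a genuine lower bound for $|\xi_\beta-\mu|$ (attained only at $\xi_\beta = R_\beta$) — so no boundary case spoils the strict inequality.
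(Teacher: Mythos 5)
Your proof is correct and is essentially the paper's own argument: the paper chains $R_{\beta}<\frac{R_{\beta}-\mu}{\lambda R_{\beta}}\leq\bigl|\frac{\xi_{\beta}-\mu}{\lambda\xi_{\beta}}\bigr|$, where the second step is your reverse triangle inequality and the first is exactly your quadratic inequality $\lambda R_{\beta}^{2}-R_{\beta}+\mu<0$, equivalent to $1<R_{\beta}<\mu/\lambda$. You have merely written out the factorization and the strictness checks that the paper leaves as ``easy to see.''
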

\begin{proof}
 It is easy to see that
 \begin{equation*}
 R_{\beta} < \frac{R_{\beta} - \mu}{\lambda R_{\beta}}  \leq \left|\frac{\xi_{\beta} - \mu}{\lambda\xi_{\beta}}\right|
 \end{equation*}
where the first inequality is equivalent to
 \begin{equation*}
  1< R_{\beta} < \frac{\mu}{\lambda}.
\end{equation*}
\end{proof}
\begin{lemma}\label{lemma2}
Let $\lambda + \mu=1~(1/2 < \mu \leq 1)$ and $\alpha < \beta $. Let  $\mathcal{C}_{R_i}$ be a circle centered at 0  with radius $R_i$ in $\mathbb{C}$ and $\xi_i \in \mathcal{C}_{R_i}~(i=\alpha,\beta)$. If $1< R_{\alpha} < R_{\beta} < c$ where
\begin{equation*}
c = \begin{cases}
\frac{\mu}{\lambda} & \textrm{if}~~\lambda \neq 0\\
\infty &\textrm{if}~~ \lambda =0,
\end{cases}
\end{equation*}
then
\begin{equation}
 R_{\alpha} < \left| \frac{ \xi_{\alpha}\xi_{\beta}}{\mu+\lambda\xi_{\alpha}\xi_{\beta}}\right|. \label{lemma2eq}
 \end{equation}
\end{lemma}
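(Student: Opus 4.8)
The plan is to reduce inequality (\ref{lemma2eq}) to an estimate involving only the radii. Since $|\xi_\alpha\xi_\beta| = R_\alpha R_\beta$, dividing both sides of (\ref{lemma2eq}) by $R_\alpha R_\beta$ shows that (\ref{lemma2eq}) is equivalent to
\begin{equation*}
\left|\mu + \lambda\xi_\alpha\xi_\beta\right| < R_\beta ,
\end{equation*}
and I would establish this bound uniformly over $\xi_\alpha \in \mathcal{C}_{R_\alpha}$ and $\xi_\beta \in \mathcal{C}_{R_\beta}$.

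The degenerate case $\lambda = 0$ (so $c = \infty$ and $\mu = 1$) is immediate: the left side equals $1$, while $R_\beta > R_\alpha > 1$. For $\lambda \neq 0$ one has $\mu,\lambda > 0$, so the triangle inequality gives
\begin{equation*}
\left|\mu + \lambda\xi_\alpha\xi_\beta\right| \;\le\; \mu + \lambda R_\alpha R_\beta \;<\; \mu + \lambda R_\beta^2 ,
\end{equation*}
the last step using $R_\alpha < R_\beta$. It therefore suffices to verify the one-variable inequality $\lambda R_\beta^2 - R_\beta + \mu < 0$. The key observation is that, because $\mu + \lambda = 1$, one checks directly the identity $\lambda t^2 - t + \mu = \lambda(t-1)\big(t - \tfrac{\mu}{\lambda}\big)$; equivalently, the discriminant $1 - 4\lambda\mu = (2\mu-1)^2$ is a perfect square with roots exactly $1$ and $\mu/\lambda$. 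Since $1 < R_\beta < \mu/\lambda$, the two linear factors have opposite signs and the product is negative, which is precisely what is needed. Assembling the estimates,
\begin{equation*}
\left|\frac{\xi_\alpha\xi_\beta}{\mu + \lambda\xi_\alpha\xi_\beta}\right| = \frac{R_\alpha R_\beta}{\left|\mu + \lambda\xi_\alpha\xi_\beta\right|} > \frac{R_\alpha R_\beta}{R_\beta} = R_\alpha .
\end{equation*}

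I do not anticipate a genuine obstacle, since the argument is short; the only points requiring a moment's care are treating $\lambda = 0$ separately, confirming $\mu,\lambda \ge 0$ so that the triangle-inequality step is legitimate, and recognizing that the hypothesis $\mu + \lambda = 1$ is exactly what makes $\lambda t^2 - t + \mu$ factor through the endpoints $1$ and $\mu/\lambda$ of the admissible range for $R_\beta$ — which collapses the estimate to the transparent sign condition $(R_\beta - 1)(R_\beta - \mu/\lambda) < 0$. This parallels, now for the product $\xi_\alpha\xi_\beta$, the radius bookkeeping already used in Lemma \ref{lemma1}.
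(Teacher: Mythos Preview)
Your argument is correct and is essentially the same as the paper's: both reduce to the triangle-inequality bound $|\mu+\lambda\xi_\alpha\xi_\beta|\le \mu+\lambda R_\alpha R_\beta$, the monotonicity step $R_\alpha<R_\beta$, and the one-variable inequality $\mu+\lambda R_\beta^2<R_\beta$, which the paper records as ``the first inequality is equivalent to $1<R_\beta<\mu/\lambda$'' and you make explicit via the factorization $\lambda t^2 - t + \mu = \lambda(t-1)(t-\mu/\lambda)$. The only cosmetic difference is that the paper chains the inequalities on the fraction itself while you bound the denominator first and divide at the end.
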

\begin{proof}
Let $\lambda \neq 0$. It is easy to see
 \begin{equation*}
    R_{\alpha}< \frac{R_{\alpha}R_{\beta}}{\mu + \lambda R_{\beta}R_{\beta}} < \frac{R_{\alpha}R_{\beta}}{\mu + \lambda R_{\alpha}R_{\beta}}  \leq \left| \frac{ \xi_{\alpha}\xi_{\beta}}{\mu+\lambda\xi_{\alpha}\xi_{\beta}}\right|
 \end{equation*}
where the first inequality is equivalent to
\begin{equation*}
    1 < R_{\beta} < \frac{\mu}{\lambda}.
\end{equation*}
If $\lambda =0$, (\ref{lemma2eq}) is obvious.
\end{proof}
\begin{lemma}\label{lemma3}
Let $\lambda + \mu=1~(1/2 < \mu \leq 1)$ and $\alpha < \beta < \gamma$. Let  $\mathcal{C}_{R_i}$ be a circle centered at 0  with radius $R_i$ in $\mathbb{C}$ and $\xi_i \in \mathcal{C}_{R_i}~(i=\alpha,\beta,\gamma)$.  If~ $1< R_{\alpha} < R_{\beta} < R_{\gamma} < c$ where
\begin{equation*}
c = \begin{cases}
\frac{\mu}{\lambda} & \textrm{if}~~\lambda \neq 0\\
\infty &\textrm{if}~~ \lambda =0,
\end{cases}
\end{equation*}
then
\begin{equation}
  R_{\alpha} <  \left| \frac{\xi_{\alpha}\xi_{\gamma} - \lambda\xi_{\alpha}\xi_{\beta}\xi_{\gamma} }{\mu} \right|.
  \label{lemma3eq}
\end{equation}
\end{lemma}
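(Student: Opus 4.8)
The plan is to reduce the complex inequality (\ref{lemma3eq}) to one about the radii alone, in the same spirit as Lemmas \ref{lemma1} and \ref{lemma2}. First factor the numerator: $\xi_\alpha\xi_\gamma - \lambda\xi_\alpha\xi_\beta\xi_\gamma = \xi_\alpha\xi_\gamma(1-\lambda\xi_\beta)$, so the right-hand side of (\ref{lemma3eq}) equals $\frac{1}{\mu}\,|\xi_\alpha|\,|\xi_\gamma|\,|1-\lambda\xi_\beta| = \frac{R_\alpha R_\gamma}{\mu}\,|1-\lambda\xi_\beta|$. By the triangle inequality $|1-\lambda\xi_\beta| \ge 1 - \lambda|\xi_\beta| = 1-\lambda R_\beta$, and when $\lambda\neq 0$ this lower bound is positive because $R_\beta < \mu/\lambda$. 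Hence it suffices to prove $R_\alpha < \frac{R_\alpha R_\gamma}{\mu}(1-\lambda R_\beta)$, i.e.\ $\mu < R_\gamma(1-\lambda R_\beta)$.

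For the case $\lambda\neq 0$, I would then run the estimate in two steps. Since $R_\beta < R_\gamma$ and $1-\lambda R_\beta>0$, one gets $R_\gamma(1-\lambda R_\beta) > R_\beta(1-\lambda R_\beta) = R_\beta - \lambda R_\beta^2$. It then remains to check $R_\beta - \lambda R_\beta^2 > \mu$ on the interval $1 < R_\beta < \mu/\lambda$, and this rests on the factorization
\[
R_\beta - \lambda R_\beta^2 - \mu \;=\; -\lambda\bigl(R_\beta - 1\bigr)\bigl(R_\beta - \tfrac{\mu}{\lambda}\bigr) \;=\; \lambda\,(R_\beta-1)\bigl(\tfrac{\mu}{\lambda}-R_\beta\bigr),
\]
which is valid because $1$ and $\mu/\lambda$ are precisely the roots of $\lambda R^2 - R + \mu = 0$ (their product is $\mu/\lambda$ and, using $\lambda+\mu=1$, their sum is $1/\lambda$). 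In the range $1 < R_\beta < \mu/\lambda$ all three factors $\lambda$, $R_\beta-1$, $\tfrac{\mu}{\lambda}-R_\beta$ are strictly positive, so the claim follows. If $\lambda = 0$ then $\mu = 1$ and $c=\infty$, and the right-hand side of (\ref{lemma3eq}) is simply $R_\alpha R_\gamma > R_\alpha$, so the statement is immediate.

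Packaging this, one obtains the single chain
\[
R_\alpha \;<\; \frac{R_\alpha\,(R_\beta - \lambda R_\beta^2)}{\mu} \;<\; \frac{R_\alpha\,(R_\gamma - \lambda R_\beta R_\gamma)}{\mu} \;\le\; \left|\frac{\xi_\alpha\xi_\gamma - \lambda\xi_\alpha\xi_\beta\xi_\gamma}{\mu}\right|,
\]
mirroring the displays in Lemmas \ref{lemma1} and \ref{lemma2}, with side remarks recording why each inequality holds (the first reducing to $\mu < R_\beta-\lambda R_\beta^2$ via the factorization above, the second to $(R_\gamma-R_\beta)(1-\lambda R_\beta)>0$). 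I do not anticipate a real obstacle; the only points requiring care are the direction of the triangle inequality — we need a \emph{lower} bound on $|1-\lambda\xi_\beta|$, which is exactly where $R_\beta < \mu/\lambda$ enters so that $1-\lambda R_\beta$ stays positive — and checking that every intermediate quantity is positive so the inequalities can be chained and the common factor $R_\alpha$ cancelled without reversing.
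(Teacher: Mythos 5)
Your proof is correct and follows essentially the same route as the paper: factor the numerator, bound $|1-\lambda\xi_\beta|$ from below by $1-\lambda R_\beta$ via the triangle inequality, and reduce to a positivity statement for the quadratic $\lambda R^2 - R + \mu$ on $(1,\mu/\lambda)$, with the $\lambda=0$ case handled trivially. The only cosmetic difference is that your intermediate term pins the quadratic at $R_\beta$ (using $1<R_\beta<\mu/\lambda$), whereas the paper uses $R_\gamma(1-\lambda R_\gamma)$ and the condition $1<R_\gamma<\mu/\lambda$; both chains are valid.
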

\begin{proof}
Let $\lambda\neq 0$. It is easy to see that
 \begin{equation*}
  R_{\alpha}< \frac{R_{\alpha}R_{\gamma}(1- \lambda R_{\gamma})}{\mu}<  \frac{R_{\alpha}R_{\gamma}(1- \lambda R_{\beta})}{\mu}\leq   \left| \frac{\xi_{\alpha}\xi_{\gamma} - \lambda\xi_{\alpha}\xi_{\beta}\xi_{\gamma} }{\mu} \right|
 \end{equation*}
where the first inequality is equivalent to
 \begin{equation*}
   1 < R_{\gamma} < \frac{\mu}{\lambda}.
\end{equation*}
If $\lambda =0$, (\ref{lemma3eq}) is obvious.
\end{proof}
\noindent Now, we find the transition probability of the MADM.
\begin{theorem}\label{theorem1}
Let $\lambda + \mu=1~ (1/2 < \mu  \leq 1)$ and $\mathcal{C}_{R_i}~ (i=1,\cdots, N)$ be a circle oriented counterclockwise, centered at 0 with radius $R_i$. Assume that $1< R_{1} <\cdots < R_{N} < c$ where
\begin{equation*}
c = \begin{cases}
\frac{\mu}{\lambda} & \textrm{if}~~\lambda \neq 0\\
\infty &\textrm{if}~~ \lambda =0.
\end{cases}
\end{equation*}
 The transition probability of the MADM with $N \geq 2$ particles is
\begin{equation}
P_Y(X;t) = \sum_{\sigma \in \mathbb{S}_N}\Big(\frac{1}{2\pi i}\Big)^N\int_{\mathcal{C}_{R_N}}\cdots\int_{\mathcal{C}_{R_1}}A_{\sigma}\prod_i^N \Big(\xi_{\sigma(i)}^{x_i - y_{\sigma(i)}-1}e^{\varepsilon(\xi_i)t}\Big)~d\xi_1\cdots \xi_N.\label{transition}
\end{equation}
\end{theorem}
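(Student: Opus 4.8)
The plan is to verify that the function defined by the right-hand side of (\ref{transition}), viewed as a function of $X\in\mathbb{Z}^N$ and $t\ge 0$, satisfies the evolution equation (\ref{master}) on all of $\mathbb{Z}^N$, the boundary condition (\ref{boundary}), and the initial condition (\ref{IC}) in the physical region; once this is done, a standard uniqueness argument (the MADM is a continuous-time Markov chain on a countable state space, and, as in \cite{TW1,Schutz,SasaWada1998Oct}, (\ref{master})--(\ref{boundary}) together with a mild growth bound characterize its transition probability, the bound being immediate here since each contour is a fixed circle with $1<R_i<c$) identifies it with $P_Y(X;t)$. The first two conditions hold for arbitrary admissible radii; only the last one uses the nested condition $1<R_1<\cdots<R_N<c$, and that is where the work lies.

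For the evolution equation I would differentiate under the integral sign, which is legitimate since the contours are fixed and compact. Writing $\prod_i e^{\varepsilon(\xi_i)t}=\prod_i e^{\varepsilon(\xi_{\sigma(i)})t}$ (a harmless relabeling of the product), the operator $\tfrac{d}{dt}$ brings down $\sum_k\varepsilon(\xi_{\sigma(k)})=\sum_k\bigl(p\,\xi_{\sigma(k)}^{-1}+q\,\xi_{\sigma(k)}-1\bigr)$. Multiplying the integrand by $p\,\xi_{\sigma(k)}^{-1}$ turns the factor $\xi_{\sigma(k)}^{x_k-y_{\sigma(k)}-1}$ into $p\,\xi_{\sigma(k)}^{(x_k-1)-y_{\sigma(k)}-1}$, i.e.\ it produces precisely the integrand with $x_k$ replaced by $x_k-1$, weighted by $p$; similarly $q\,\xi_{\sigma(k)}$ yields the $x_k\mapsto x_k+1$ term weighted by $q$, and the $-1$ yields $-u(X;t)$. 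Summing over $k$ and over $\sigma\in\mathbb{S}_N$ reproduces (\ref{master}) term by term. For the boundary condition I would fix $i\in\{1,\dots,N-1\}$ and pair each $\sigma$ with $\sigma'=\sigma\cdot(i,i+1)$. Since swapping positions $i$ and $i+1$ changes only the inversion contributed by that pair, one has $A_{\sigma'}=S_{\beta\alpha}A_\sigma$, where $\{\alpha,\beta\}=\{\sigma(i),\sigma(i+1)\}$ with $\alpha<\beta$ (relabel so that $\sigma(i)=\alpha$), while all other factors of the two integrands coincide. Factoring these out, the $(\xi_\alpha,\xi_\beta)$-dependence of the summed pair is, up to a nonzero common monomial, $\xi_\alpha^{x_i}\xi_\beta^{x_{i+1}}+S_{\beta\alpha}\,\xi_\alpha^{x_{i+1}}\xi_\beta^{x_i}$. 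Evaluating this at the three configurations $(x_i,x_{i+1})=(a,a-1),\,(a-1,a-1),\,(a,a)$ appearing in (\ref{boundary}) and dividing by $\xi_\alpha^{a-1}\xi_\beta^{a-1}$, the required relation collapses to the single algebraic identity
\[
\xi_\alpha+S_{\beta\alpha}\,\xi_\beta=(1+S_{\beta\alpha})(\mu+\lambda\xi_\alpha\xi_\beta),
\]
which is exactly equivalent to the definition of $S_{\beta\alpha}$. Summing over the $N!/2$ pairs gives (\ref{boundary}).

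The substantive step is the initial condition. At $t=0$ the exponential factors equal $1$, the $\sigma=\mathrm{id}$ term contributes $\prod_i\frac{1}{2\pi i}\int_{\mathcal{C}_{R_i}}\xi_i^{x_i-y_i-1}\,d\xi_i=\prod_i\delta_{x_i,y_i}=\delta_Y(X)$, and it remains to prove that for every $X,Y$ in the physical region
\[
\sum_{\sigma\neq\mathrm{id}}\Bigl(\tfrac{1}{2\pi i}\Bigr)^{\!N}\int_{\mathcal{C}_{R_N}}\!\!\cdots\!\int_{\mathcal{C}_{R_1}}A_\sigma\prod_i\xi_{\sigma(i)}^{x_i-y_{\sigma(i)}-1}\,d\xi_1\cdots d\xi_N=0 .
\]
I would establish this by contour deformation and residue calculus, integrating the variables out one contour at a time (innermost first) and inducting on $N$, in the spirit of Tracy and Widom's treatment of the ASEP in \cite{TW1}. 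The key point is that, as a rational function of its variables, $A_\sigma$ (a product of $S$-matrices $-\frac{\mu+\lambda\xi_\alpha\xi_\beta-\xi_\alpha}{\mu+\lambda\xi_\alpha\xi_\beta-\xi_\beta}$) has its finite poles on loci such as $\mu+\lambda\xi_\alpha\xi_\beta=\xi_\beta$, $\mu+\lambda\xi_\alpha\xi_\beta=0$, and $\mu=\xi_\alpha\xi_\gamma-\lambda\xi_\alpha\xi_\beta\xi_\gamma$, and Lemmas \ref{lemma1}, \ref{lemma2}, \ref{lemma3} (together with their straightforward many-variable analogues) say precisely that, under $1<R_1<\cdots<R_N<c$, all of these loci stay outside the inner contours. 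Consequently each contour can be shrunk to the origin, picking up only the residue at $0$, and tracking these residues exhibits the non-identity terms as cancelling in pairs, exactly as in \cite{TW1} but now with the additional $\lambda\xi_\alpha\xi_\beta$ terms of the MADM $S$-matrix supplying the poles that the three lemmas are designed to control. This residue bookkeeping — checking that every residue produced by one permutation is matched by an equal and opposite one from another — is the main obstacle, and it is the only place the precise nested-radius condition $1<R_1<\cdots<R_N<c$ is needed.
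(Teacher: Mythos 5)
The parts you actually carry out (differentiating under the integral for (\ref{master}) and the adjacent-transposition pairing giving the identity $\xi_\alpha+S_{\beta\alpha}\xi_\beta=(1+S_{\beta\alpha})(\mu+\lambda\xi_\alpha\xi_\beta)$ for (\ref{boundary})) are correct, but these are precisely the steps the paper dismisses with a remark as being identical to the ASEP case. The entire content of the theorem's proof — and the only place where the nested radii $1<R_1<\cdots<R_N<c$ and the restriction to the physical region matter — is the initial condition (\ref{Isigma}), and there your proposal stops at a plan: you yourself label the "residue bookkeeping — checking that every residue produced by one permutation is matched by an equal and opposite one from another" as the main obstacle and never perform it. That is a genuine gap, not a finished proof.

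Moreover, the mechanism you sketch is not the one the paper uses, and it is doubtful it can be pushed through as stated. The paper does not shrink contours to the origin and cancel residues between permutations; it shows each non-identity term vanishes individually. Permutations with $\sigma(K)=K$ are killed by the induction hypothesis, and for $\sigma(k)=K$ with $k\neq K$ the paper substitutes $\xi_K=\eta/\prod_{i\neq K}\xi_i$ (for $N=2$, $\xi_2=\eta/\xi_1$ in (\ref{right2})), singles out $q=\min\{\sigma(k+1),\cdots,\sigma(K)\}$, and observes that in the resulting integrand (\ref{eqForPush}) the exponent of $\xi_q$ is nonnegative \emph{because} $x_1\leq\cdots\leq x_N$ and $y_1\leq\cdots\leq y_N$; hence $\xi_q$ has no pole at the origin, and Lemmas \ref{lemma1}, \ref{lemma2}, \ref{lemma3} place all remaining poles of the $\xi_q$-integrand outside $\mathcal{C}_{R_q}$, so the $\xi_q$-integral is zero outright by Cauchy's theorem — no pairwise matching of residues is ever needed. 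Your sketch never invokes the physical-region ordering of $X$ and $Y$, yet (\ref{Isigma}) is false without it, and without some device like the product change of variables the variable carrying the exponent $x_k-y_K-1$ genuinely has a pole at the origin, so "shrink each contour to the origin and pair up the residues at $0$" drops you exactly into the unperformed combinatorics. To complete your argument you would need either to carry out that cancellation in full detail or to adopt the substitution-plus-nonnegative-exponent argument that the paper actually uses.
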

\begin{remark} In this paper we consider the nontrivial MADM with at least two particles because the MADM is just the simple random walk when $N=1$. It can be easily shown that (\ref{transition}) satisfies (\ref{master}) and (\ref{boundary}) for all $X \in \mathbb{Z}^N$ in the same way as in the proof of the ASEP \cite{TW1}. Hence, we shall prove only that (\ref{transition}) satisfies (\ref{IC}).
\end{remark}
\begin{proof}
The initial condition (\ref{IC}) is satisfied if
\begin{equation*}
P_Y(X;0) = \Big(\frac{1}{2\pi i}\Big)^N\int_{\mathcal{C}_{R_N}}\cdots\int_{\mathcal{C}_{R_1}} \prod_i \xi_{i}^{x_i - y_{i}-1}d\xi_1\cdots \xi_N,
\end{equation*}
which is the integral for the identity permutation \textit{id}. Hence we will show that
\begin{equation}
\sum_{\sigma \neq id}\int_{\mathcal{C}_{R_N}}\cdots\int_{\mathcal{C}_{R_1}}A_{\sigma}\prod_i \xi_{\sigma(i)}^{x_i - y_{\sigma(i)}-1}d\xi_1\cdots \xi_N=0 \label{Isigma}
\end{equation}
for any state in the physical region.
We prove (\ref{Isigma}) by induction. First, we  show that
\begin{equation}
 \int_{\mathcal{C}_{R_2}}\int_{\mathcal{C}_{R_1}}\frac{\mu + \lambda\xi_2\xi_1 - \xi_1}{\mu + \lambda\xi_2\xi_1 - \xi_2}~\xi_2^{x_1-y_2-1}\xi_1^{x_2-y_1-1}d\xi_1d\xi_2=0 \label{right2}
\end{equation}
when $x_1 \leq x_2$ and $y_1 \leq y_2$.
 Substituting $\xi_2= \frac{\eta}{\xi_1}$ so that $\eta$ runs over $ \mathcal{C}_{R}$ where $R= R_1R_2$, the left hand side of (\ref{right2}) becomes
\begin{equation}
\int_{\mathcal{C}_{R}}\int_{\mathcal{C}_{R_1}}
\frac{\mu + \lambda\eta - \xi_1}{\xi_1(\mu + \lambda\eta) - {\eta}}~\eta^{x_1-y_2-1}\xi_1^{x_2-x_1+y_2-y_1}d\xi_1d\eta \label{Forpush}
\end{equation}
where  $x_2-x_1 + y_2- y_1 \geq 0$. If we integrate with respect to $\xi_1$, the integrand has a simple pole at $\frac{\eta}{\mu + \lambda\eta}$, which is the only pole. This pole lies outside of $\mathcal{C}_{R_1}$ by Lemma \ref{lemma2} and so the integral  with respect to $\xi_1$ is zero.\\
 \indent Now, suppose that the statement is true for $N= K-1$. Let
   \begin{equation*}
      I(\sigma) =\int_{\mathcal{C}_{R_K}}\cdots\int_{\mathcal{C}_{R_1}}A_{\sigma}\prod_i^K \xi_{\sigma(i)}^{x_i - y_{\sigma(i)}-1}~d\xi_1\cdots \xi_K.
   \end{equation*}
   The sum of $I(\sigma)$ over $\sigma \in \mathbb{S}_K$ such that $\sigma(K) = K$ and $\sigma \neq id$ is simply
   $$ \int_{\mathcal{C}_{R_K}}\Bigg[\sum_{\substack{\sigma \neq id,\\ \sigma(K) =K}} \int_{\mathcal{C}_{R_{K-1}}}\cdots\int_{\mathcal{C}_{R_1}}A_{\sigma}\prod_i^{K-1} \xi_{\sigma(i)}^{x_i - y_{\sigma(i)}-1}~d\xi_1\cdots \xi_{K-1}\Bigg] \xi_K^{x_K -y_K-1}d\xi_K = 0$$
    by the induction hypothesis. Now we  will show that  $I(\sigma) =0$ for each $\sigma \in \mathbb{S}_K$ such that  $ \sigma(k)=K$ where $k \neq K$. Let $Q_{\sigma}= \{\sigma(k+1), \cdots, \sigma(K)\}$ and $q= \min Q_{\sigma}$. For a fixed $k$, observe that
    \begin{equation*}
     A_{\sigma} = \prod_{\substack{i<j,k,\\\sigma(i) > \sigma(j)}} S_{\sigma(i)\sigma(j)}
                  \prod_{k<j } S_{K\sigma(j)}
                  \prod_{\substack{k<i<j,\\\sigma(i) > \sigma(j)}} S_{\sigma(i)\sigma(j)}.
    \end{equation*}
   Substituting $\xi_K = \frac{\eta}{\prod_{i=1}^{K-1} \xi_i}$ so that $\eta \in \mathcal{C}_R$ where $R = \prod_i^KR_i$ in
  \begin{eqnarray*}
  I(\sigma) &=& \int_{\mathcal{C}_{R_K}}\cdots\int_{\mathcal{C}_{R_1}}\prod_{\substack{i<j,k,\\\sigma(i) > \sigma(j)}} S_{\sigma(i)\sigma(j)}
                  \prod_{k<j} S_{K\sigma(j)}
                  \prod_{\substack{k<i<j,\\\sigma(i) > \sigma(j)}} S_{\sigma(i)\sigma(j)} \\
  & & ~~~ \Big(\prod_{i \neq k, \sigma^{-1}(q)} \xi_{\sigma(i)}^{x_i-y_{\sigma(i)} -1} \Big)~\xi_K^{x_k -y_K-1}\xi_q^{x_{\sigma^{-1}(q)} - y_q-1}~d\xi_1 \cdots d\xi_K,
  \end{eqnarray*}
  and ignoring the sign of $S$-matrices, we have
   \begin{eqnarray}
   & & \int_{\mathcal{C}_{R}}\int_{\mathcal{C}_{R_{K-1}}}\cdots \int_{\mathcal{C}_{R_1}}\prod_{\substack{i<j,k,\\\sigma(i) > \sigma(j)}} S_{\sigma(i)\sigma(j)}\prod_{\substack{k<i<j,\\\sigma(i) > \sigma(j)}} S_{\sigma(i)\sigma(j)} \label{eqForPush} \\
  &  &\Bigg(\frac{\mu+\frac{\lambda\eta}{\prod_{i\neq K,q}\xi_i} - \xi_q}{\xi_q\big(\mu+\frac{\lambda\eta}{\prod_{i\neq K,q}\xi_i}\big) - \frac{\eta}{\prod_{i\neq K,q}\xi_i} }\Bigg)
  \prod_{\substack{k<j,\\ j \neq \sigma^{-1}(q)}}  \Bigg(\frac{\mu\xi_q+\frac{\lambda\eta}{\prod_{i\neq K,\sigma(j),q}\xi_i} - \xi_q\xi_{\sigma(j)}}{\mu\xi_q+\frac{\lambda\eta}{\prod_{i\neq K,\sigma(j),q}\xi_i} - \frac{\eta}{\prod_{i \neq q,K}\xi_{i}} }\Bigg)  \nonumber \\
  &  & \Big(\prod_{i \neq k, \sigma^{-1}(q)} \xi_{\sigma(i)}^{x_i - x_k + y_K -y_{\sigma(i)} -1} \Big)~\eta^{x_k -y_K-1}\xi_q^{x_{\sigma^{-1}(q)} -x_k +y_K- y_q}~d\xi_1 \cdots d\xi_{K-1}d\eta. \nonumber
  \end{eqnarray}
  If an $S$-matrix in
   $$
   \prod_{\substack{i<j,k\\\sigma(i) > \sigma(j)}} S_{\sigma(i)\sigma(j)}\prod_{\substack{k<i<j\\\sigma(i) > \sigma(j)}} S_{\sigma(i)\sigma(j)}
   $$
   contains $\xi_q$, the $S$-matrix much be in the form of $S_{\alpha q}~ (\alpha> q)$ because $q = \min Q_{\sigma}$, and observe that the exponent of $\xi_q^{x_{\sigma^{-1}(q)} -x_k +y_K- y_q}$ is nonnegative. Hence, if we integrate with respect to $\xi_q$, there are singularities at
  \begin{equation*}
    \frac{  \frac{\eta}{\prod_{i \neq K,q}\xi_i}}{\mu+\frac{\lambda\eta}{\prod_{i\neq K,q}\xi_i}}
  \end{equation*}
  and
  \begin{equation*}
    \frac{\frac{\eta}{\prod_{i \neq K,q}\xi_i} - \frac{\lambda\eta}{\prod_{i\neq K,\sigma(j),q}\xi_i}}{\mu}
  \end{equation*}
  where  $j>k$, and possibly at
  \begin{equation*}
  \frac{\xi_{\alpha} - \mu}{\lambda\xi_{\alpha}}
  \end{equation*}
  from $S_{\alpha q}$ if $\lambda \neq 0$. If $\lambda =0$, there is no pole in $S_{\alpha q}$. By Lemma \ref{lemma1}, Lemma \ref{lemma2}, and Lemma \ref{lemma3} all these poles lie outside of $\mathcal{C}_{R_q}$, and hence the integral with respect to $\xi_q$ is zero. Therefore,
  \begin{equation*}
  \sum_{\substack{\sigma \in \mathbb{S}_K,\\ \sigma \neq id}} I(\sigma) = 0.
  \end{equation*}
  This completes the proof.
\end{proof}
\begin{remark}
If $p$ or $q$ is zero, the model is totally asymmetric in the sense that particles move only in one direction. However, unlike the TASEP, the transition probability is not expressed as a determinant as long as $\mu \neq 1$.
\end{remark}

\section{The position of the $m$th leftmost particle} \label{sectionMth}
Let $x_m(t)$ be the position of the $m$th leftmost particle at time $t$. In this section we find the probability that the $m$th leftmost particle is at $x$ at time $t$, which is denoted by $\mathbb{P}(x_m(t) = x)$.
\subsection{Notations}\label{assumption}
  Let $S = \{s_1,\cdots,s_k\} \subset \mathbb{N} $ with $s_i < s_{j}~(i<j)$ and  $\xi$ be a $k$-dimensional vector in $\mathbb{C}^k$. Define a $k$-variable function on $\mathbb{C}^k$
\begin{equation}
I(\xi;s_1,\cdots,s_k) = J(\xi;s_1,\cdots,s_k)\cdot \Big(\prod_{s \in S} \xi_{s} -1\Big) \label{1}
\end{equation}
 where
\begin{equation*}
J(\xi;s_1,\cdots,s_k) := \prod_{ i<j }\frac{\xi_{s_i} - \xi_{s_j}}{\mu + \lambda \xi_{s_i}\xi_{s_j} - \xi_{s_j}}\cdot \frac{1}{\prod_{s \in S} ( \xi_{s}-1)}
\end{equation*}
for $S$ with $|S| \geq 2$.  If $S$ is a singleton with $s \in S$, then we define
 $$J(\xi;s) = \frac{1}{\xi_s-1},$$
 and so $I(\xi;s)=1$. For the empty set, we define $I(\xi;\emptyset) =J(\xi;\emptyset) = 1$. \\ \\
 \noindent Let $U=\{u_1,\cdots, u_N\}\subset \mathbb{N}$ with $u_i < u_{j}~ (i<j)$ and suppose that  $f_U(\xi)= f(\xi_{u_1},\cdots,\xi_{u_N})$, as a function of $\xi_{u_i}$,  is analytic in the set $\{\xi_{u_i} \in \mathbb{C} : 0< |\xi_{u_i}| < R_{u_i}\} $
 for each $\xi_{u_i}$. For $S =\{s_1,\cdots, s_k\} \subset U$ let
  \begin{equation}
  f_U(\xi;s_1,\cdots,s_k) = f(\xi_{u_1},\cdots,\xi_{u_N})|_{\xi_{u_i} =1,~u_i \not\in S}.\label{f}
  \end{equation}
   Define a map $g_U : U \rightarrow \mathbb{Z}$ by $g_U(u_i) = i$. That is, $g_U$ maps the $i$th smallest element in $U$ to $i$.   Define
$$\Sigma_U[S] := \sum_{s \in S}g_U(s)$$
for $S \subset U$, and let $U_P = U \setminus P$ for $P \subset U \subset \mathbb{N}$.\\ \\
\noindent  Recall the notations in \cite{TW1} for the use of the identity (1.9) in \cite{TW1}.
\begin{equation*}
[N] = \frac{\mu^N - \lambda^N}{\mu - \lambda}
\end{equation*}
and
\begin{equation*}
[N]! = [N][N-1]\cdots [1], ~{N \brack m} = \frac{[N]!}{[m]![N-m]!}
\end{equation*}
with $[0]!=1.$

\subsection{The position of the rightmost particle at time $t$}\label{rightmost}
The way of finding $\mathbb{P}(x_N(t) = x)$  is the same as the way of finding $\mathbb{P}(x_1(t) = x)$ in the ASEP \cite{TW1}. That is, we sum the transition probability over all possible configurations.
Let $x_N = x$ and  $x_{N-i} = x-v_1 -\cdots - v_{i}$ for $ 1 \leq  i \leq N-1$. Then the sum of $P_Y(X;t)$ over all possible configurations is written as a multiple geometric series which is convergent
\begin{eqnarray*}
& & \sum_{v_1 = 0}^{\infty}\cdots\sum_{v_{N-1} = 0}^{\infty}\Big(\frac{1}{2\pi i} \Big)^N\int_{\mathcal{C}_{R_N}}\cdots \int_{\mathcal{C}_{R_1}} \\
& & \hspace{1cm}
\sum_{\sigma \in \mathbb{S}_N} A_{\sigma}\prod_{k=1}^{N-1}\Big(\prod_{i=1}^{N-k}\xi_{\sigma(i)}\Big)^{-v_k}\Big(\prod_i \xi_i^{x-y_i-1}e^{\varepsilon(\xi_i)t}\Big)~d\xi_1\cdots d\xi_N.
\end{eqnarray*}
Summing over each $v_i$, the integrand becomes
\begin{equation*}
\sum_{\sigma \in \mathbb{S}_N}A_{\sigma}\frac{\xi_{\sigma(1)}^{N-1}\xi_{\sigma(2)}^{N-2}\cdots \xi_{\sigma(N-1)}}{(\xi_{\sigma(1)}\cdots \xi_{\sigma(N)}-1)(\xi_{\sigma(1)}\cdots \xi_{\sigma(N-1)}-1)\cdots (\xi_{\sigma(1)}-1)} \Big(\prod_i\xi_i - 1 \Big)\prod_i\Big(\xi_i^{x-y_i-1}e^{\varepsilon(\xi_i)t}\Big),
\end{equation*}
and $A_{\sigma}$ is written as
\begin{equation*}
A_{\sigma} = \textrm{sgn}~\sigma \frac{\prod_{i<j}\big(\mu + \lambda\xi_{\sigma(i)}\xi_{\sigma(j)} - \xi_{\sigma(j)}\big)}{\prod_{i<j}(\mu + \lambda\xi_i\xi_j - \xi_j)}.
\end{equation*}
Letting $\xi_i \rightarrow \xi_{N-i+1}$ in the identity in (1.6) in \cite{TW1} and then applying the identity to
\begin{equation*}
\sum_{\sigma \in \mathbb{S}_N}\textrm{sgn}~\sigma \prod_{i<j}(\mu + \lambda\xi_{\sigma(i)}\xi_{\sigma(j)} - \xi_{\sigma(j)})
\frac{\xi_{\sigma(1)}^{N-1}\xi_{\sigma(2)}^{N-2}\cdots \xi_{\sigma(N-1)}}{(\xi_{\sigma(1)}\cdots \xi_{\sigma(N)}-1)(\xi_{\sigma(1)}\cdots \xi_{\sigma(N-1)}-1)\cdots (\xi_{\sigma(1)}-1)},
\end{equation*}
the sum becomes
\begin{equation*}
\mu^{N(N-1)/2} \frac{\prod_{i<j}(\xi_i - \xi_j)}{\prod_i (\xi_i-1)}.
\end{equation*}
With this result we obtain the formula for $\mathbb{P}(x_N(t) = x)$ of the MADM, which is to be compared with $\mathbb{P}(x_1(t) = x)$ of the ASEP \cite{TW1}.
\begin{proposition}\label{right}
 With $\mu,\mathcal{C}_{R_i}$ in Theorem \ref{theorem1} and $I(\xi;1,\cdots,N)$ given by (\ref{1}), we have
\begin{equation}
 \mathbb{P}(x_N(t) = x) = \frac{\mu^{N(N-1)/2}}{(2\pi i)^N}\int_{\mathcal{C}_{R_N}}\cdots \int_{\mathcal{C}_{R_1}} I(\xi;1,\cdots,N)\prod_i^N\Big(\xi_i^{x-y_i-1}e^{\varepsilon(\xi_i)t}\Big)~d\xi_1\cdots d\xi_N. \label{rightmosteq}
\end{equation}
\end{proposition}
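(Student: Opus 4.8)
The plan is to obtain (\ref{rightmosteq}) by summing the transition probability (\ref{transition}) over every state $X=(x_1,\dots,x_N)$ in the physical region with $x_N=x$. Such states are parametrized by their gaps: set $x_N=x$ and $x_{N-i}=x-v_1-\cdots-v_i$ for $1\le i\le N-1$ with $v_1,\dots,v_{N-1}\ge 0$. Substituting this into (\ref{transition}) and pulling the (fixed, counterclockwise) contour integrals to the outside, one is left inside the integrand with the $\sigma$-sum, the factors $\prod_i\xi_i^{x-y_i-1}e^{\varepsilon(\xi_i)t}$, and a multiple geometric series $\sum_{v_1\ge 0}\cdots\sum_{v_{N-1}\ge 0}\prod_{k=1}^{N-1}\big(\prod_{i=1}^{N-k}\xi_{\sigma(i)}\big)^{-v_k}$.

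First I would justify the interchange of the infinite summations with the integrals. On $\mathcal{C}_{R_i}$ with $R_i>1$ every partial product $\prod_{i=1}^{N-k}\xi_{\sigma(i)}$ has modulus at least $R_1^{N-k}>1$, uniformly in the $\xi$'s and in $\sigma$; hence each geometric series converges and is dominated, uniformly on the contours, by a single convergent geometric series, so Fubini applies. Summing the series produces, for each $\sigma$, the closed form $\prod_{k=1}^{N-1}\dfrac{\prod_{i=1}^{N-k}\xi_{\sigma(i)}}{\prod_{i=1}^{N-k}\xi_{\sigma(i)}-1}=\dfrac{\xi_{\sigma(1)}^{N-1}\xi_{\sigma(2)}^{N-2}\cdots\xi_{\sigma(N-1)}}{(\xi_{\sigma(1)}\cdots\xi_{\sigma(N-1)}-1)\cdots(\xi_{\sigma(1)}-1)}$; multiplying and dividing by the $\sigma$-independent factor $\prod_i\xi_i-1=\xi_{\sigma(1)}\cdots\xi_{\sigma(N)}-1$ puts the resulting integrand into the form displayed just before the proposition.

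Next I would rewrite $A_\sigma$ in the antisymmetrized form $A_\sigma=\textrm{sgn}\,\sigma\cdot\dfrac{\prod_{i<j}(\mu+\lambda\xi_{\sigma(i)}\xi_{\sigma(j)}-\xi_{\sigma(j)})}{\prod_{i<j}(\mu+\lambda\xi_i\xi_j-\xi_j)}$ --- the same manipulation of the $S$-matrix product used for the ASEP in \cite{TW1} --- pull all $\sigma$-independent factors out of the $\sigma$-sum, and apply the symmetrization identity (1.6) of \cite{TW1} (after the relabeling $\xi_i\mapsto\xi_{N-i+1}$) to collapse $\sum_{\sigma\in\mathbb{S}_N}$ into the single term $\mu^{N(N-1)/2}\,\dfrac{\prod_{i<j}(\xi_i-\xi_j)}{\prod_i(\xi_i-1)}$. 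Combining this with the leftover factor $1/\prod_{i<j}(\mu+\lambda\xi_i\xi_j-\xi_j)$ and the factor $\prod_i\xi_i-1$, and comparing with the definition (\ref{1}), one recognizes the integrand as $\mu^{N(N-1)/2}\,I(\xi;1,\dots,N)\prod_i\big(\xi_i^{x-y_i-1}e^{\varepsilon(\xi_i)t}\big)$, which is exactly (\ref{rightmosteq}).

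The only genuine obstacle is the convergence and interchange step in the second paragraph; everything after it is bookkeeping, and the hypothesis $R_1>1$ in Theorem \ref{theorem1} is precisely what makes the geometric series converge uniformly on the contours (so that the formal sum of $P_Y(X;t)$ over configurations is legitimate). A secondary point requiring care is keeping track of indices and signs when invoking the Tracy--Widom identity after the reversal $\xi_i\mapsto\xi_{N-i+1}$, but this is routine once the correct identity has been identified.
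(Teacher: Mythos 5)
Your proposal is correct and follows essentially the same route as the paper: fix $x_N=x$, parametrize the remaining configurations by nonnegative gaps, sum the resulting convergent geometric series (legitimate since $R_i>1$), rewrite $A_\sigma$ in the antisymmetrized signed-ratio form, and collapse the $\sigma$-sum via the Tracy--Widom identity (1.6) after the relabeling $\xi_i\mapsto\xi_{N-i+1}$, yielding $\mu^{N(N-1)/2}\prod_{i<j}(\xi_i-\xi_j)/\prod_i(\xi_i-1)$ and hence the stated integrand $\mu^{N(N-1)/2}I(\xi;1,\dots,N)\prod_i\xi_i^{x-y_i-1}e^{\varepsilon(\xi_i)t}$. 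The explicit uniform-domination justification of the sum--integral interchange is a point the paper leaves implicit, but otherwise the argument is the same.
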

\subsection{The position of the leftmost particle at time $t$}
\begin{lemma}\label{lemma6}
Let $S$ be a nonempty subset of $U=\{u_1,\cdots,u_N\} \subset \mathbb{N}$ with $u_i < u_{j}~(i<j)$. Then,
\begin{equation*}
\sum_{S} (-1)^{|S|+1}c^{\Sigma_U[S] - |S|} =1
\end{equation*}
where the sum runs over all $S$ and $c$ is a nonzero constant.
\end{lemma}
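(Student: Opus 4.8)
The plan is to re-index the sum over subsets of $U$ by subsets of $\{1,\dots,N\}$ via the bijection $g_U$, and then to recognize the resulting alternating sum as the complement of the empty term in the full subset-expansion of a product that happens to vanish.

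First I would note that since $g_U$ sends the $i$th smallest element of $U$ to $i$, it is a bijection $U\to\{1,\dots,N\}$, so $S\mapsto T:=g_U(S)$ is a bijection between the nonempty subsets of $U$ and the nonempty subsets of $\{1,\dots,N\}$; moreover $|S|=|T|$ and, since $\Sigma_U[S]=\sum_{s\in S}g_U(s)$, we have $\Sigma_U[S]-|S|=\sum_{s\in S}\bigl(g_U(s)-1\bigr)=\sum_{t\in T}(t-1)$. Hence the left-hand side of the asserted identity equals $\sum_{\emptyset\neq T\subseteq\{1,\dots,N\}}(-1)^{|T|+1}c^{\sum_{t\in T}(t-1)}$.

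Next I would invoke the elementary subset-expansion
\[
\prod_{t=1}^{N}\bigl(1-c^{\,t-1}\bigr)=\sum_{T\subseteq\{1,\dots,N\}}\ \prod_{t\in T}\bigl(-c^{\,t-1}\bigr)=\sum_{T\subseteq\{1,\dots,N\}}(-1)^{|T|}c^{\sum_{t\in T}(t-1)}.
\]
The factor with $t=1$ on the left is $1-c^{0}=0$, so the whole product is identically zero. Splitting off the $T=\emptyset$ term, which contributes $1$, gives $\sum_{\emptyset\neq T}(-1)^{|T|}c^{\sum_{t\in T}(t-1)}=-1$; multiplying by $-1$ yields exactly $\sum_{\emptyset\neq S\subseteq U}(-1)^{|S|+1}c^{\Sigma_U[S]-|S|}=1$, as claimed.

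The argument is purely combinatorial and I do not anticipate any genuine difficulty; the only point requiring care is the bookkeeping $\Sigma_U[S]-|S|=\sum_{s\in S}\bigl(g_U(s)-1\bigr)$ together with the observation that the smallest element of $U$ always contributes exponent $0$, which is precisely what produces the vanishing factor $1-c^0$. One could instead argue by induction on $N$, peeling off the largest element $u_N$ and using $c^{\Sigma_U[S]}=c^{N}\,c^{\Sigma_U[S\setminus\{u_N\}]}$ when $u_N\in S$, but the generating-function version above is shorter and is the one I would write up.
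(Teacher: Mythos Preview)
Your argument is correct. The reindexing by $T=g_U(S)$ is immediate from the definition $\Sigma_U[S]=\sum_{s\in S}g_U(s)$, and the product expansion $\prod_{t=1}^N(1-c^{t-1})=\sum_{T\subseteq\{1,\dots,N\}}(-1)^{|T|}c^{\sum_{t\in T}(t-1)}$ together with the vanishing factor at $t=1$ gives the identity cleanly.

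The paper proceeds differently: it sets up an involution on nonempty subsets by toggling the element $u_1$, pairing each $S\ni u_1$ with $S'=S\setminus\{u_1\}$ and checking that $(-1)^{|S|+1}c^{\Sigma_U[S]-|S|}+(-1)^{|S'|+1}c^{\Sigma_U[S']-|S'|}=0$; the sole fixed point $\{u_1\}$ contributes $1$. Your generating-function identity and the paper's sign-reversing involution are really two faces of the same fact: the zero factor $1-c^{0}$ in your product is exactly the algebraic incarnation of the paper's cancellation on $u_1$. Your version has the advantage of being a one-line computation once the product is written down, while the paper's version is slightly more hands-on but makes the cancellation mechanism explicit without any auxiliary identity. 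Either is perfectly adequate here.
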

\begin{proof}
Let
$$\mathcal{S}_n^{(1)}:= \{S\subset U : |S| =n, ~u_1 \in S\}$$
and
$$\mathcal{S}_n^{(0)}:= \{S \subset U : |S| = n,~ u_1 \not\in S\}.$$
 Then, there is a bijection from $\mathcal{S}_n^{(1)}$ to $\mathcal{S}_{n-1}^{(0)}$ that maps $S \in \mathcal{S}_n^{(1)}$ to ${S}' =S\setminus \{u_1\}\in\mathcal{S}_{n-1}^{(0)}$. It is easy to check that
\begin{equation*}
(-1)^{|S|+1}c^{\Sigma_U[S] - |S|} + (-1)^{|S'|+1}c^{\Sigma_U[S']  - |S'|} =0
\end{equation*}
for $2 \leq n \leq N$ and the only remaining subset after summing over all nonempty subsets is $S=\{u_1\}$ for which
\begin{equation*}
 (-1)^{|S|+1}c^{\Sigma_U[S] - |S|} = 1.
\end{equation*}
\end{proof}
\begin{lemma}\label{lemma5}
Let $U = \{u_1,\cdots,u_N\} \subset \mathbb{N}$ with $u_i < u_j~(i<j)$.  Given a nonempty subset
 $P= \{p_1,\cdots,p_{K}\} \subsetneq U$ let $S= \{s_1,\cdots, s_k\} \subset U_P$ with $s_i < s_{j}~(i<j).$  Then, with $\mu$ and $\mathcal{C}_{R_i}$  in Theorem \ref{theorem1}, and $I(\xi;s_1,\cdots,s_k)$ and $f_U(\xi; s_1,\cdots,s_k)$ given by (\ref{1}) and (\ref{f}), respectively, we have
\begin{eqnarray}
& & \sum_{z=0}^{\infty} \int_{\mathcal{C}_{R_p}}\int_{\mathcal{C}_{R_{s_k}}}\cdots\int_{\mathcal{C}_{R_{s_1}}}
\Big(\prod_i^k\xi_{s_i}\Big)^z
I(\xi;s_1,\cdots,s_k)f_U(\xi; s_1,\cdots,s_k,p) ~d\xi_{s_1}\cdots d\xi_{s_k}d\xi_p \nonumber \\
&=& - \int_{\mathcal{C}_{R_p}}\int_{\mathcal{C}_{R_{s_k}}}\cdots\int_{\mathcal{C}_{R_{s_1}}}
J(\xi; s_1,\cdots,s_k)f_U(\xi; s_1,\cdots,s_k,p) ~d\xi_{s_1}\cdots d\xi_{s_k}d\xi_p  \nonumber \\
& & \hspace{2cm}+ ~(2\pi i)^{k}\Big(\frac{1}{\mu}\Big)^{k(k-1)/2} \int_{\mathcal{C}_{R_p}} f_U(\xi; p) ~d\xi_p. \label{sum1}
\end{eqnarray}
where $\int_{\mathcal{C}_{R_p}} = \int_{\mathcal{C}_{R_{p_K}}}\cdots\int_{\mathcal{C}_{R_{p_1}}}$ and $d\xi_p = d\xi_{p_1} \cdots d\xi_{p_K}$. Also,  $f_U(\xi;s_1,\cdots,s_k,p)$ and  $f_U(\xi;p)$ imply $f_U(\xi;s_1,\cdots,s_k,p_1,\cdots,p_K)$ and $f_U(\xi;p_1,\cdots,p_K),$ respectively.
\end{lemma}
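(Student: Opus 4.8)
The driving observation is the factorization $I(\xi;s_1,\dots,s_k)=J(\xi;s_1,\dots,s_k)\big(\prod_{s\in S}\xi_s-1\big)$ built into (\ref{1}), which makes the $z$-sum telescope: pointwise on the contours,
\[
\sum_{z=0}^{Z}\Big(\prod_{s\in S}\xi_s\Big)^{z}\Big(\prod_{s\in S}\xi_s-1\Big)=\Big(\prod_{s\in S}\xi_s\Big)^{Z+1}-1 .
\]
Hence, after multiplying by $f_U(\xi;s_1,\dots,s_k,p)$ and integrating over $\mathcal C_{R_{s_1}},\dots,\mathcal C_{R_{s_k}},\mathcal C_{R_p}$, the $Z$-th partial sum of the left side of (\ref{sum1}) equals $-\int\!\cdots\!\int J\,f_U\;+\;\int\!\cdots\!\int\big(\prod_{s\in S}\xi_s\big)^{Z+1}J\,f_U$, and the first term is exactly the first term on the right of (\ref{sum1}). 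The whole lemma thus reduces to showing that the second term tends, as $Z\to\infty$, to $(2\pi i)^{k}(1/\mu)^{k(k-1)/2}\int_{\mathcal C_{R_p}}f_U(\xi;p)\,d\xi_p$; this will simultaneously establish that the series in (\ref{sum1}) converges.

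I would isolate the needed fact as a claim proved by induction on $k$: whenever $g(\xi_{s_1},\dots,\xi_{s_k})$ is analytic in each $\xi_{s_i}$ on $\{0<|\xi_{s_i}|<R_{s_i}\}$,
\[
\lim_{Z\to\infty}\int_{\mathcal C_{R_{s_k}}}\!\!\cdots\!\int_{\mathcal C_{R_{s_1}}}\Big(\prod_i^k\xi_{s_i}\Big)^{Z}J(\xi;s_1,\dots,s_k)\,g\;d\xi_{s_1}\cdots d\xi_{s_k}=(2\pi i)^{k}\mu^{-k(k-1)/2}\,g(1,\dots,1).
\]
The base case $k=1$ is a one-variable residue count: $I(\xi;s_1)=1$, $J(\xi;s_1)=1/(\xi_{s_1}-1)$, and both sides of (\ref{sum1}) reduce to $2\pi i\int_{\mathcal C_{R_p}}$ of the principal part of $f_U$ at $\xi_{s_1}=0$ evaluated at $1$. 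Granting the claim in general, one applies it with $g=f_U(\xi;s_1,\dots,s_k,p)$ (the variables $\xi_p$ being inert parameters): then $g(1,\dots,1)=f_U(\xi;p)$ by (\ref{f}), substituting the analyticity point $\xi_{s_i}=1$ keeps analyticity in the $\xi_p$, and the convergence is uniform in $\xi_p$ on the fixed compact contour $\mathcal C_{R_p}$, so the $\xi_p$-integral commutes with the limit and produces the desired second term on the right of (\ref{sum1}).

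For the induction step I integrate in $\xi_{s_1}$ first. Since $s_1<s_j$ forces $1<R_{s_1}<R_{s_j}<\mu/\lambda$, Lemma \ref{lemma1} gives $|(\xi_{s_j}-\mu)/(\lambda\xi_{s_j})|>R_{s_j}>R_{s_1}$, so every root of an $S$-matrix denominator lies outside $\mathcal C_{R_{s_1}}$ (and such roots simply do not occur when $\lambda=0$); inside $\mathcal C_{R_{s_1}}$ the integrand therefore has only the simple pole at $\xi_{s_1}=1$ coming from $1/(\xi_{s_1}-1)$ together with the singularity at $\xi_{s_1}=0$. Using $\mu+\lambda\xi_{s_j}-\xi_{s_j}=\mu(1-\xi_{s_j})$, the residue at $\xi_{s_1}=1$ collapses to $\mu^{-(k-1)}\big(\prod_{i\ge2}\xi_{s_i}\big)^{Z}J(\xi;s_2,\dots,s_k)\,g(1,\xi_{s_2},\dots,\xi_{s_k})$, which has exactly the shape of the claim in $k-1$ variables; applying the inductive hypothesis and multiplying by the $2\pi i$ from the residue theorem yields $(2\pi i)^{k}\mu^{-[(k-1)+(k-1)(k-2)/2]}g(1,\dots,1)=(2\pi i)^{k}\mu^{-k(k-1)/2}g(1,\dots,1)$, as required.

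The remaining piece — and the step I expect to be the real technical obstacle — is to show that the contribution of the pole at $\xi_{s_1}=0$ disappears in the limit. That residue equals $\big(\prod_{i\ge2}\xi_{s_i}\big)^{Z}$ times the coefficient of $\xi_{s_1}^{-Z-1}$ in the Laurent expansion of $J(\xi;s_1,\dots,s_k)\,g$ about $\xi_{s_1}=0$; as a function of $\xi_{s_1}$ this product is analytic on $\{0<|\xi_{s_1}|<1\}$ (all other $\xi_{s_1}$-poles of $J$ having modulus $\ge1$), so estimating that coefficient by a Cauchy integral over a fixed circle $|\xi_{s_1}|=\rho$ with $0<\rho<1/\prod_{i\ge2}R_{s_i}$ bounds the residue by a constant times $\big(\rho\prod_{i\ge2}R_{s_i}\big)^{Z}\to0$, uniformly over the remaining contours; hence its integral contributes nothing in the limit. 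This is precisely where the possible essential singularity of $f_U$ at the origin (as produced, in the applications, by the factor $e^{\varepsilon(\xi_i)t}$) matters: it is why one must work with a genuine $Z\to\infty$ limit rather than an identity valid for all large $Z$. Combining the two residue contributions completes the induction, and with it the lemma.
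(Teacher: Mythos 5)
Your argument is correct, but it reaches (\ref{sum1}) by a genuinely different mechanism than the paper. The paper never takes a $Z\to\infty$ limit: for each fixed $z$ it shrinks the contours one at a time ($\mathcal{C}_{R_{s_l}}\to\mathcal{C}_{r_{s_l}}$ with $r_{s_l}<(1/R_{s_k})^{k-l}$), extracting at each step the residue at $\xi_{s_l}=1$ (the powers of $1/\mu$, with Lemma \ref{lemma1} guaranteeing no other pole is crossed), so that on each resulting contour configuration $|\prod_{i\geq l}\xi_{s_i}|<1$ and the geometric series can be summed inside the integral to give $-J$; it then moves each small contour back to $\mathcal{C}_{R_{s_l}}$, and the newly produced residues cancel telescopically, leaving exactly the two terms of (\ref{sum1}). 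You instead sum first, using the finite telescoping $\sum_{z=0}^{Z}w^{z}(w-1)=w^{Z+1}-1$ on the original contours, which immediately yields the $-\int Jf_U$ term and reduces the lemma to the asymptotics of $\int\big(\prod_{s}\xi_{s}\big)^{Z+1}Jf_U$; that limit you evaluate by a per-variable induction in which the residue at $\xi_{s_1}=1$ reproduces the same $\mu^{-(k-1)}$ factor (again via Lemma \ref{lemma1}) and the origin contribution is killed by the Cauchy estimate $|a_{-Z-1}|\leq\rho^{Z+1}\sup|Jg|$ with $\rho\prod_{i\geq2}R_{s_i}<1$. Your route makes the convergence of the $z$-series transparent and avoids the bookkeeping of intermediate radii and the $k$-term cancellation; the paper's route avoids any asymptotic estimate, working with exact identities for each fixed $z$, and mirrors the Tracy--Widom style contour manipulations used elsewhere in the paper. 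Two small points to tighten: the base case of your induction should formally be the claim itself at $k=1$ (it follows from the same one-variable residue count you describe), and the uniformity in $(\xi_{s_2},\dots,\xi_{s_k},\xi_{p_1},\dots,\xi_{p_K})$ that lets you integrate the error term over the remaining contours and pass the limit under $\int_{\mathcal{C}_{R_p}}$ requires joint continuity of $f_U$ on the product of contours --- immediate for the product-form $f_U$ used in the applications, and available in general from separate analyticity via Hartogs' theorem.
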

\begin{proof} In the first step, we change the contour $\mathcal{C}_{R_{s_1}}$ to ~$\mathcal{C}_{r_{s_1}}$ with $r_{s_1} < ({1}/{R_{s_k}})^{k-1} <1$ in the left hand side of (\ref{sum1}). Then,
\begin{eqnarray*}
& &\int_{\mathcal{C}_{R_p}}\int_{\mathcal{C}_{R_{s_k}}}\cdots\int_{\mathcal{C}_{R_{s_1}}}
\Big(\prod_{i=1}^k\xi_{s_i}\Big)^z
I(\xi;s_1,\cdots,s_k)~f_U(\xi; s_1,\cdots,s_k,p) ~d\xi_{s_1}\cdots d\xi_{s_k}d\xi_p \\
&=&\int_{\mathcal{C}_{R_p}}\int_{\mathcal{C}_{R_{s_k}}}\cdots\int_{\mathcal{C}_{r_{s_1}}}
\Big(\prod_{i=1}^k\xi_{s_i}\Big)^z
I(\xi;s_1,\cdots,s_k)~f_U(\xi; s_1,\cdots,s_k,p) ~d\xi_{s_1}\cdots d\xi_{s_k}d\xi_p \\
& &  \hspace{0.2cm}+~ ~(2\pi i)\frac{1}{\mu^{k-1}}\int_{\mathcal{C}_{R_p}}\int_{\mathcal{C}_{R_{s_k}}}\cdots\int_{\mathcal{C}_{R_{s_2}}} \\
& & \hspace{3cm}
\Big(\prod_{i=2}^k\xi_{s_i}\Big)^z
I(\xi;s_2,\cdots,s_k)f_U(\xi; s_2,\cdots,s_k,p) ~d\xi_{s_2}\cdots d\xi_{s_k}d\xi_p
\end{eqnarray*}
where the second term on the right hand side is the residue at $\xi_{s_1} = 1$ and $\frac{1}{\mu^{k-1}}$ is from
$$\Bigg[\prod_{1<j}\frac{\xi_{s_1}-\xi_{s_j}}{\mu + \lambda \xi_{s_1}\xi_{s_j} - \xi_{s_j}}\Bigg]_{\xi_{s_1}=1}.$$
Note that if $\lambda \neq 0$, there are other poles of $\xi_{s_1}$ in $I(\xi;s_1,\cdots,s_k)$ but these poles lie outside of $\mathcal{C}_{R_{s_1}}$ by Lemma \ref{lemma1}. Hence these poles do not produce residues. If $\lambda =0$, there are no poles of $\xi_{s_1}$ in $I(\xi;s_1,\cdots,s_k)$.  In the second step, we change the contour $\mathcal{C}_{R_{s_2}}$ in the residue obtained from the first step to ~$\mathcal{C}_{r_{s_2}}$ with $r_{s_2} < ({1}/{R_{s_k}})^{k-2} <1$, which produces a new residue. We repeat this procedure until we obtain the integral with respect to only $\xi_{p_1}, \cdots, \xi_{p_K}$.
In general, in the $l$th step with $ l \geq 2$,  the contour $\mathcal{C}_{R_{s_l}}$ in the residue obtained in the $l-1$ step  is changed to ~$\mathcal{C}_{r_{s_l}}$ with $r_{s_l} < ({1}/{R_{s_k}})^{k-l} \leq 1$. Thus, after the $k$th step, the original integral is written as
\begin{eqnarray*}
\sum_{l=1}^k c_l\int_{\mathcal{C}_{R_p}}\int_{\mathcal{C}_{R_{s_k}}}\cdots\int_{\mathcal{C}_{r_{s_l}}}
\Big(\prod_{i=l}^k\xi_{s_i}\Big)^z
I(\xi;s_l\cdots,s_k)f_U(\xi; s_l\cdots,s_k,p) ~d\xi_{s_l}\cdots d\xi_{s_k}d\xi_p
\end{eqnarray*}
where
\begin{equation*}
c_l=
\begin{cases} 1  & \textrm{if ~$l=1$}, \\
  \frac{(2\pi i)^{l-1}}{\mu^{\sum_{i=1}^{l-1}(k-i)}}  & \textrm{if~ $2 \leq l \leq k$}.
\end{cases}
\end{equation*}
Now, the geometric series $\sum_{z=0}^{\infty}\Big(\prod_{i=l}^k\xi_{s_i}\Big)^z$ converges since $|\prod_{i=l}^k\xi_{s_i}|<1$. In the $l$th term where $1 \leq l \leq k$,
\begin{equation*}
\sum_{z=0}^{\infty}\Big(\prod_{i=l}^k\xi_{s_i}\Big)^z
I(\xi;s_l\cdots,s_k) =- J(\xi;s_l\cdots,s_k),
\end{equation*}
and then changing $\mathcal{C}_{r_{s_l}}$ back to $\mathcal{C}_{R_{s_l}}$, the $l$th term becomes
\begin{eqnarray*}
& & -c_l\int_{\mathcal{C}_{R_p}}\int_{\mathcal{C}_{R_{s_k}}}\cdots\int_{\mathcal{C}_{R_{s_l}}}
J(\xi;s_l\cdots,s_k)f_U(\xi; s_l\cdots,s_k,p) ~d\xi_{s_l}\cdots d\xi_{s_k}d\xi_p \\
& &+~ c_l
\frac{2\pi i}{\mu^{k-l}}\int_{\mathcal{C}_{R_p}}\int_{\mathcal{C}_{R_{s_k}}}\cdots\int_{\mathcal{C}_{R_{s_{l+1}}}}
J(\xi;s_{l+1}\cdots,s_k)f_U(\xi; s_{l+1}\cdots,s_k,p) ~d\xi_{s_{l+1}}\cdots d\xi_{s_k}d\xi_p
\end{eqnarray*}
and the $k$th term becomes
\begin{equation*}
 -c_k\int_{\mathcal{C}_{R_p}}\int_{\mathcal{C}_{R_{s_k}}}
J(\xi;s_k)f_U(\xi; s_k,p) ~d\xi_{s_k}d\xi_p + c_k(2\pi i)
\int_{\mathcal{C}_{R_p}}
f_U(\xi;p) ~d\xi_p.
\end{equation*}
 The positive term of the $l$th term cancels the negative term of the $(l+1)$th term for $l=1,\cdots,k-1$, and hence the only remaining terms are  the terms in the right hand side of (\ref{sum1}).
\end{proof}
\begin{proposition}\label{propo}
Let $1/2<\mu < 1$ and $U =\{u_1,\cdots,u_N\} \subset \mathbb{N}$ with $u_i < u_{j}~(i<j)$. Let  $A_{\sigma}$ be given by (\ref{coeff}) with a permutation  $\sigma$ on $ U$. With $\mathcal{C}_{R_{u_i}}$ as in Theorem \ref{theorem1} and  $f_U(\xi)$ given by (\ref{f}), we have for $2 \leq N  \in \mathbb{N}$,
\begin{eqnarray*}
& & \sum_{z_1,\cdots, z_{N-1} = 0}^{\infty}\Big(\frac{1}{2\pi i}\Big)^{N}\int_{\mathcal{C}_{R_{u_N}}}\cdots\int_{\mathcal{C}_{R_{u_1}}}\sum_{\sigma \in \mathbb{S}_N}A_{\sigma} \prod_{k=2}^N\Big(\prod_{i=k}^N \xi_{\sigma(i)} \Big)^{z_{k-1}} f_U(\xi)~ d\xi_{u_1}\cdots d\xi_{u_N} \\
&=&~
 \sum_{S }c_S\Big(\frac{1}{2\pi i}\Big)^{k}
\int_{\mathcal{C}_{R_{s_k}}}\cdots\int_{\mathcal{C}_{R_{s_1}}}I(\xi;s_1,\cdots,s_k)~f_U(\xi;s_1,\cdots,s_k)~
d\xi_{s_1}\cdots d\xi_{s_k}
\end{eqnarray*}
where
$$c_S = (-1)^{|S|+1}\frac{\lambda^{\Sigma_U[S] - |S|}}{\mu^{\Sigma_U[S] - |S|(|S|+1)/2}}
$$
and the sum runs over all  nonempty subsets $S = \{s_1,\cdots,s_k\} \subset U$ with $s_i < s_{j}~(i<j)$.
\end{proposition}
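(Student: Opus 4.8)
The plan is to argue by induction on $N=|U|$, peeling off the summation variables one at a time. At each stage a single, \emph{a priori} divergent geometric series is converted into a manageable object by Lemma \ref{lemma5}: the residues collected when a contour is shrunk (to force convergence) and then restored decompose a full contribution into a sum over subsets; the $S$-matrices evaluated at $\xi=1$ supply the powers of $\lambda/\mu$, the Jacobian-type residues supply the powers of $\mu$, and Lemma \ref{lemma6} effects the final combinatorial collapse. For the base case $N=2$ one computes directly: $\mathbb{S}_2=\{\mathrm{id},\tau\}$ with $A_\tau=S_{u_2u_1}$, and applying Lemma \ref{lemma5} to the $\mathrm{id}$- and $\tau$-pieces of the single $z_1$-series (with $P=\{u_1\}$ for the first, $P=\{u_2\}$ for the second, and using $S_{u_2u_1}\big|_{\xi_{u_1}=1}=\lambda/\mu$) reduces the claim to the algebraic identity $\frac{1}{\xi_{u_2}-1}+\frac{S_{u_2u_1}}{\xi_{u_1}-1}=\lambda\,I(\xi;u_1,u_2)$, whose numerator equals $\lambda(\xi_{u_1}-\xi_{u_2})(\xi_{u_1}\xi_{u_2}-1)$ after clearing denominators.

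For the inductive step I would isolate the sum over $z_1$, which enters only through $\big(\prod_{i=2}^N\xi_{\sigma(i)}\big)^{z_1}=\big(\prod_{u\neq\sigma(1)}\xi_u\big)^{z_1}$, and split $\mathbb{S}_N$ according to $m:=g_U(\sigma(1))$. Writing $A_\sigma=\big(\prod_{l=1}^{m-1}S_{u_mu_l}\big)A_{\sigma''}$, where $\sigma''$ is the induced permutation of $U^{(m)}:=U\setminus\{u_m\}$, the remaining double sum over $z_2,\dots,z_{N-1}$ and over $\sigma''$ (with $\xi_{u_m}$ frozen and carrying the extra weight $\big(\prod_{u\in U^{(m)}}\xi_u\big)^{z_1}$) is, after relabelling positions, exactly the left-hand side of the statement for the set $U^{(m)}$ applied to $\big(\prod_{u\in U^{(m)}}\xi_u\big)^{z_1}f_U$, which is analytic in the $U^{(m)}$-annuli. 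The induction hypothesis rewrites it as $\sum_{\emptyset\neq S\subset U^{(m)}}c^{U^{(m)}}_S\,(2\pi i)^{-|S|}\int I(\xi;S)\big(\prod_{u\in S}\xi_u\big)^{z_1}f_U(\xi;S,u_m)\,d\xi_S$, with the irrelevant variables set to $1$; in particular each $S_{u_mu_l}$ with $u_l\notin S$ has become the constant $\lambda/\mu$, while each $S_{u_mu_l}$ with $u_l\in S$ is holomorphic in $\xi_{u_l}$ inside $\mathcal{C}_{R_{u_l}}$ by Lemma \ref{lemma1} and may be carried along in the analytic factor (its poles in $\xi_{u_m}$ are harmless, as $\mathcal{C}_{R_{u_m}}$ is not moved at this step).

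The remaining $z_1$-summation now couples only to $I(\xi;S)$ through $\big(\prod_{u\in S}\xi_u\big)^{z_1}$, which is precisely the setting of Lemma \ref{lemma5} with $P=\{u_m\}$: it replaces $\sum_{z_1}\big(\prod_{u\in S}\xi_u\big)^{z_1}I(\xi;S)$ by $-J(\xi;S)$ and throws off a term in which \emph{all} of $S$ has been set to $1$. Collecting those ``drop-all'' terms, the coefficient of $(2\pi i)^{-1}\int f_U(\xi;u_m)\,d\xi_{u_m}$ is $(\lambda/\mu)^{m-1}\sum_{\emptyset\neq S\subset U^{(m)}}(-1)^{|S|+1}(\lambda/\mu)^{\Sigma_{U^{(m)}}[S]-|S|}$, which Lemma \ref{lemma6} (with $c=\lambda/\mu$) collapses to $(\lambda/\mu)^{m-1}=c^{U}_{\{u_m\}}$; summed over $m$ this is the entire $|S|=1$ part of the right-hand side. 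It remains to show that the surviving $-J(\xi;S)$ contributions, collected over all ways a prescribed $S^{*}$ with $|S^{*}|\ge2$ arises (namely as $S^{*}=S\cup\{u_m\}$ with $u_m$ ranging over $S^{*}$), reassemble into $c^{U}_{S^{*}}(2\pi i)^{-|S^{*}|}\int I(\xi;S^{*})f_U(\xi;S^{*})\,d\xi_{S^{*}}$.

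This last reassembly is where the real work lies. Once the $S$-matrix factors are cleared, it reduces to a partial-fraction / symmetric-function identity — the $N=2$ case being the one recorded above, and the general case being an instance of the Tracy--Widom identity (1.6) of \cite{TW1} transcribed into the variables indexed by $S^{*}$ — together with the purely numerical verification that the accumulated powers of $\mu$ (from iterated residues such as $\big[\prod_{1<j}\frac{\xi_{s_1}-\xi_{s_j}}{\mu+\lambda\xi_{s_1}\xi_{s_j}-\xi_{s_j}}\big]_{\xi_{s_1}=1}=\mu^{-(k-1)}$), the powers of $\lambda/\mu$ (from $S$-matrices at $\xi=1$), and the signs telescope — again via Lemma \ref{lemma6} — into the closed form $c^{U}_{S^{*}}=(-1)^{|S^{*}|+1}\lambda^{\Sigma_U[S^{*}]-|S^{*}|}\mu^{\,|S^{*}|(|S^{*}|+1)/2-\Sigma_U[S^{*}]}$. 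The only other technical point, that no contour deformation crosses a pole of an $S$-matrix, is exactly what Lemmas \ref{lemma1}--\ref{lemma3} together with the hypothesis $1<R_1<\cdots<R_N<\mu/\lambda$ guarantee.
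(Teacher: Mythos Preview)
Your inductive architecture is exactly the paper's: fix $p:=\sigma(1)$, factor off $\prod_{\alpha<p}S_{p\alpha}$, apply the induction hypothesis on $U\setminus\{p\}$, invoke Lemma~\ref{lemma5} for the remaining $z_1$-sum, and collapse the ``drop-all'' pieces via Lemma~\ref{lemma6} into the $|S|=1$ terms. The gap is in your reassembly step. Neither identity~(1.6) of \cite{TW1} nor a second application of Lemma~\ref{lemma6} does the job there: identity~(1.6) is a sum over \emph{permutations} (it is what produces $I(\xi;1,\dots,N)$ in Proposition~\ref{right}), whereas here, for each fixed $\tilde{S}=S\cup\{p\}$, one must sum over the $|\tilde{S}|$ choices of which element plays the role of $p$ --- a sum over \emph{subsets} of size $|\tilde{S}|-1$; and Lemma~\ref{lemma6} concerns sums over \emph{all} nonempty subsets, not subsets of a fixed size, so it cannot produce the coefficient $c^U_{\tilde{S}}$.

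The paper's route is: first establish the explicit factorization
\[
J(\xi;S)\prod_{\substack{\alpha<p\\\alpha\in S}}S_{p\alpha}
= J(\xi;\tilde{S})\cdot\frac{\prod_{s\in S}(\mu+\lambda\xi_s\xi_p-\xi_s)}{\prod_{s\in S}(\xi_p-\xi_s)}\,(\xi_p-1),
\]
then check by direct algebra (not Lemma~\ref{lemma6}) that the accumulated numerical prefactor $-c_S^{U\setminus\{p\}}(\lambda/\mu)^{g_U(p)-1-l}$ depends only on $\tilde{S}$, and finally apply identity~(1.9) of \cite{TW1} to the inner sum over $p\in\tilde{S}$. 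Identity~(1.9) supplies both the factor $\big(\prod_{s\in\tilde{S}}\xi_s-1\big)$ that converts $J(\xi;\tilde{S})$ into $I(\xi;\tilde{S})$ and the extra power of $\lambda$ that turns the intermediate coefficient into $c^U_{\tilde{S}}$. Your sketch has the right shape, but as written the tools you name for the last step would not close the argument.
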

\begin{proof}
We prove by induction. Let $U=\{u_1,u_2\}$. Then,
\begin{eqnarray*}
 & &\Big(\frac{1}{2\pi i}\Big)^{2}\int_{\mathcal{C}_{R_{u_2}}}\int_{\mathcal{C}_{R_{u_1}}}\Big(\xi_{u_2}^z+ S_{u_2u_1} \xi_{u_1}^z\Big)f_U(\xi)~d\xi_{u_1}d\xi_{u_2} \\
 &=&\Big(\frac{1}{2\pi i}\Big)^{2}\int_{\mathcal{C}_{r_{u_2}}}\int_{\mathcal{C}_{R_{u_1}}}\xi_{u_2}^z f_U(\xi)~d\xi_{u_1}d\xi_{u_2}+
 \Big(\frac{1}{2\pi i}\Big)^{2}\int_{\mathcal{C}_{R_{u_2}}}\int_{\mathcal{C}_{r_{u_1}}}S_{u_2u_1} \xi_{u_1}^zf_U(\xi)~d\xi_{u_1}d\xi_{u_2}.
\end{eqnarray*}
Note that the first integral in the right hand side has been obtained from changing $R_{u_2}$ to  $r_{u_2}<1$ which is permissible  because the integrand is analytic in the circle with radius $R_{u_2}$ except at the origin, and the second integral in the right hand side has been obtained from changing $R_{u_1}$  to $r_{u_1} <1$ which is also permissible because  the pole from $S_{u_2u_1}$  lies outside of $\mathcal{C}_{R_{u_1}}$ by Lemma \ref{lemma1}. Summing over $z$ and then changing  contours back to the original contours, the first integral in the right hand side becomes
\begin{eqnarray}
 & &\Big(\frac{1}{2\pi i}\Big)^{2} \int_{\mathcal{C}_{r_{u_2}}}\int_{\mathcal{C}_{R_{u_1}}}\frac{1}{1-\xi_{u_2}} f_U(\xi)~d\xi_{u_1}d\xi_{u_2} \label{N22} \\
 &=& \Big(\frac{1}{2\pi i}\Big)^{2} \int_{\mathcal{C}_{R_{u_2}}}\int_{\mathcal{C}_{R_{u_1}}}\frac{1}{1-\xi_{u_2}} f_U(\xi)~d\xi_{u_1}d\xi_{u_2} + \Big(\frac{1}{2\pi i}\Big) \int_{\mathcal{C}_{R_{u_1}}} f_U(\xi;u_1)~d\xi_{u_1} \nonumber
\end{eqnarray}
and the second integral becomes
\begin{eqnarray}
& & \Big(\frac{1}{2\pi i}\Big)^{2}\int_{\mathcal{C}_{R_{u_2}}}\int_{\mathcal{C}_{r_{u_1}}}S_{u_2u_1} \frac{1}{1-\xi_{u_1}}f_U(\xi)~d\xi_{u_1}d\xi_{u_2} \label{N23} \\
&=&\Big(\frac{1}{2\pi i}\Big)^{2}\int_{\mathcal{C}_{R_{u_2}}}\int_{\mathcal{C}_{R_{u_1}}}S_{u_2u_1}\frac{1}{1-\xi_{u_1}} f_U(\xi)~d\xi_{u_1}d\xi_{u_2} +  \Big(\frac{1}{2\pi i}\Big)\Big(\frac{\lambda}{\mu}\Big)\int_{\mathcal{C}_{R_{u_2}}} f_U(\xi;u_2)~d\xi_{u_2}\nonumber
\end{eqnarray}
where $\lambda/\mu$ is due to evaluating $S_{u_2u_1}$ at $\xi_{u_1} =1$. Now the sum of double integrals  in the right hand sides of (\ref{N22}) and (\ref{N23}) is
\begin{equation*}
 -\lambda\Big(\frac{1}{2\pi i}\Big)^{2}\int_{\mathcal{C}_{R_{u_2}}}\int_{\mathcal{C}_{R_{u_1}}} \frac{(\xi_{u_1}-\xi_{u_2})(\xi_{u_1}\xi_{u_2}-1)}{(\mu + \lambda\xi_{u_1}\xi_{u_2} - \xi_{u_2})(\xi_{u_1}-1)(\xi_{u_2}-1)} f_U(\xi)~d\xi_{u_1}d\xi_{u_2}.
\end{equation*}
Hence  we have
\begin{eqnarray*}
& & -\lambda\Big(\frac{1}{2\pi i}\Big)^{2} \int_{\mathcal{C}_{R_{u_2}}}\int_{\mathcal{C}_{R_{u_1}}} I(\xi; u_1,u_2)f_U(\xi)~d\xi_{u_1}d\xi_{u_2}\\
&+& \Big(\frac{1}{2\pi i}\Big)\int_{\mathcal{C}_{R_{u_1}}} I(\xi;u_1)f_U(\xi;u_1)~d\xi_{u_1} +  \Big(\frac{\lambda}{\mu}\Big)\Big(\frac{1}{2\pi i}\Big)\int_{\mathcal{C}_{R_{u_2}}} I(\xi;u_2)f_U(\xi;u_2)~d\xi_{u_2}
\end{eqnarray*}
where $I(\xi;u_1) = I(\xi;u_2)=1$.\\
 \indent Suppose that the statement is true for $N=K-1$.
Pick up $p \in \{u_1,\cdots, u_N\}.$ We first sum over all $\sigma$ with $\sigma(1) = p$, that is, we compute
\begin{equation}
\sum_{z_1,\cdots, z_{N-1} = 0}^{\infty}\Big(\frac{1}{2\pi i}\Big)^{N}\int_{\mathcal{C}_{R_{u_N}}}\cdots\int_{\mathcal{C}_{R_{u_1}}}\sum_{\substack{\sigma ~\textrm{with} \\ \sigma(1) =p}}A_{\sigma} \prod_{k=2}^N\Big(\prod_{i=k}^N \xi_{\sigma(i)} \Big)^{z_{k-1}} f_U(\xi)~ d\xi_{u_1}\cdots d\xi_{u_N}. \label{eq10}
\end{equation}
Let us write $A_{\sigma}$ as
$$A_{\sigma} =\prod_{\substack{\sigma(j) < p,\\ j \geq 2}} S_{p\sigma(j)} \prod_{\substack{2\leq i<j,\\ \sigma(i) > \sigma(j)}} S_{\sigma(i)\sigma(j)}.$$
Changing the order of integration, which is permissible by Fubini theorem, we write (\ref{eq10}) as
\begin{eqnarray*}
& &\Big(\frac{1}{2\pi i}\Big)\int_{\mathcal{C}_{R_{p}}}\sum_{z_1=0}^{\infty}\Bigg[\sum_{z_2,\cdots, z_{N-1} = 0}^{\infty}\Big(\frac{1}{2\pi i}\Big)^{N-1}\int_{\mathcal{C}_{R_{\sigma(N)}}}\cdots\int_{\mathcal{C}_{R_{\sigma(2)}}}\sum_{\substack{\sigma ~\textrm{with} \\ \sigma(1) =p}}\prod_{\substack{2\leq i<j,\\ \sigma(i) > \sigma(j)}} S_{\sigma(i)\sigma(j)} \\
& & \hspace{1cm}\prod_{k=3}^N\Big(\prod_{i=k}^N \xi_{\sigma(i)} \Big)^{z_{k-1}}\prod_{\substack{\sigma(j) < p,\\ j \geq 2}} S_{p\sigma(j)}\Big(\prod_{u_i \neq p}\xi_{u_i}\Big)^{z_1}
f_U(\xi)~ d\xi_{\sigma(2)}\cdots d\xi_{\sigma(N)}\Bigg] d\xi_{p},
\end{eqnarray*}
where
\begin{equation*}
\prod_{\substack{\sigma(j) < p,\\ j \geq 2}} S_{p\sigma(j)} \Big(\prod_{u_i \neq p}\xi_{u_i}\Big)^{z_1}
f_U(\xi),
\end{equation*}
as a function of $\xi_{\sigma(2)}, \cdots, \xi_{\sigma(N)}$,
is analytic in the  set $\{\xi_{\sigma(i)} \in \mathbb{C} : 0< |\xi_{\sigma(i)}| < R_{\sigma(i)}\}$ for each $i=2,\cdots, N$ by Lemma \ref{lemma1} and the assumption of $f_U(\xi)$.
 Hence, by the induction hypothesis, we have
\begin{eqnarray}
& &\int_{\mathcal{C}_{R_{p}}}\sum_{z_1=0}^{\infty} \sum_{S \subset U_{P}}c_S\Big(\frac{1}{2\pi i}\Big)^{k+1}
\int_{\mathcal{C}_{R_{s_k}}}\cdots\int_{\mathcal{C}_{R_{s_1}}}I(\xi;s_1,\cdots,s_k) \prod_{\substack{\alpha < p,\\ \alpha \in S}} S_{p\alpha}
\nonumber \\
& &~\times \Big[\prod_{\substack{\alpha < p,\\ \alpha \not\in S}} S_{p\alpha} \Big]_{\xi_{\alpha}=1}\Big(\prod_{s \in S}\xi_{s}\Big)^{z_1}f_U(\xi)|_{\substack{\xi_i=1,\\ i \in U_{P} \setminus S}}~
d\xi_{s_1}\cdots d\xi_{s_k} d\xi_{p} \label{induction}
\end{eqnarray}
where $S = \{s_1,\cdots,s_k\}$ with $s_i < s_j ~(i<j)$ and  $P = \{p\}$.
Note that $c_S$ in (\ref{induction}) is defined for $S \subset U_{P}$.
If we denote the number of elements in the set $\{\alpha : \alpha < p ~~\textrm{and}~~ \alpha \in S\}$ by $l$, then
\begin{equation*}
  \Big[\prod_{\substack{\alpha < p,\\ \alpha \not\in S}} S_{p\alpha} \Big]_{\xi_{\alpha}=1} =
  \Big(\frac{\lambda}{\mu}\Big)^{g_U(p)-1-l}.
\end{equation*}
Hence (\ref{induction}) is written as
\begin{eqnarray}
& & \sum_{S \subset U_{P}}c_S\Big(\frac{\lambda}{\mu}\Big)^{g_U(p)-1-l}\Big(\frac{1}{2\pi i}\Big)^{k+1}\sum_{z_1=0}^{\infty}\int_{\mathcal{C}_{R_{p}}}
\int_{\mathcal{C}_{R_{s_k}}}\cdots\int_{\mathcal{C}_{R_{s_1}}}I(\xi;s_1,\cdots,s_k)
\prod_{\substack{\alpha < p,\\ \alpha \in S}} S_{p\alpha}\nonumber \\
& & \hspace{2cm}\times \Big(\prod_{s \in S}\xi_{s}\Big)^{z_1}f_U(\xi)|_{\substack{\xi_i=1,\\ i \in U_{P} \setminus S}}~
d\xi_{s_1}\cdots d\xi_{s_k} d\xi_{p}.\label{lemmaEq1}
\end{eqnarray}
Now, we can apply Lemma \ref{lemma5} to (\ref{lemmaEq1}) since $\prod_{{\alpha < p,~ \alpha \in S}} S_{p\alpha}\big[f_U(\xi)\big]_{\substack{\xi_i=1,\\ i \in U_{P}\setminus S}}$ satisfies the condition for the $f_U$ in Lemma \ref{lemma5}. Observing that  $ \Big[\prod_{\substack{\alpha < p,\\ \alpha \in S}} S_{p\alpha} \Big]_{\xi_{\alpha}=1}  = (\lambda/\mu)^l,$ we have, after applying Lemma \ref{lemma5},
\begin{eqnarray}
& &-  \sum_{S \subset U_{P}}c_S\Big(\frac{\lambda}{\mu}\Big)^{g_U(p)-1-l}\Big(\frac{1}{2\pi i}\Big)^{k+1}\int_{\mathcal{C}_{R_{p}}}
\int_{\mathcal{C}_{R_{s_k}}}\cdots\int_{\mathcal{C}_{R_{s_1}}}J(\xi;s_1,\cdots,s_k) \nonumber \\
& & \hspace{3cm}
\prod_{{\alpha < p,~ \alpha \in S}} S_{p\alpha}~  f_U(\xi)|_{\substack{\xi_i=1,\\ i \in U_{P} \setminus S}}~
d\xi_{s_1}\cdots d\xi_{s_k} d\xi_{p} \nonumber \\
& & +\sum_{S \subset U_{P}}c_S\Big(\frac{\lambda}{\mu}\Big)^{g_U(p)-1}\Big(\frac{1}{\mu}\Big)^{k(k-1)/2}\Big(\frac{1}{2\pi i}\Big)\int_{\mathcal{C}_{R_{p}}}
  f_U(\xi;p)~
 d\xi_{p} \label{eq15}
\end{eqnarray}
In the second sum,
$$
c_S\Big(\frac{\lambda}{\mu}\Big)^{g_U(p)-1}\Big(\frac{1}{\mu}\Big)^{k(k-1)/2} = (-1)^{|S|+1}\Big(\frac{\lambda}{\mu}\Big)^{\Sigma_{U_P}[S] - |S|+g_U(p)-1},
$$
and by Lemma \ref{lemma6}, the second sum becomes
\begin{equation}
 \Big(\frac{\lambda}{\mu}\Big)^{g_U(p)-1}\Big(\frac{1}{2\pi i}\Big)\int_{\mathcal{C}_{R_{p}}}
  f_U(\xi;p)~
 d\xi_{p} \label{p}.
\end{equation}
Now we  express $c_S\Big(\frac{\lambda}{\mu}\Big)^{g_U(p)-1-l} $ in the first sum
in terms of $\tilde{S} := P \cup S\subset U$.
\begin{eqnarray*}
 -c_S\Big(\frac{\lambda}{\mu}\Big)^{g_U(p)-1-l} &=& (-1)^{|S|}\frac{\lambda^{\Sigma_{U_{P}}[S] - |S|+g_U(p)-1-l}}{\mu^{\Sigma_{U_{P}}[S] - |S|(|S|+1)/2+g_U(p)-1-l}} \\
 &=&(-1)^{k}\frac{\lambda^{\sum_i^k g_{U_P}(s_i) -k+g_U(p)-1-l}}{\mu^{\sum_i^k g_{U_P}(s_i) - k(k+1)/2+g_U(p)-1-l}} \\
 &=&(-1)^{k}\frac{\lambda^{\sum_i^k g_{U_P}(s_i) +(k-l)+g_U(p) -2k-1}}{\mu^{\sum_i^k g_{U_P}(s_i) +(k-l)+g_U(p)- k- k(k+1)/2-1}} \\
 &=&(-1)^{|\tilde{S}|-1}\frac{\lambda^{\Sigma_U[\tilde{S}] - 2|\tilde{S}|+1}}{\mu^{\Sigma_U[\tilde{S}] - |\tilde{S}|(|\tilde{S}|+1)/2}} := \tilde{c}_{\tilde{S}}.
\end{eqnarray*}
Hence,   (\ref{eq15}) is written as a single sum
\begin{eqnarray}
& &\sum_{{S} \subset U_{P}}\tilde{c}_{\tilde{S}}\Big(\frac{1}{2\pi i}\Big)^{|\tilde{S}|}\int_{\mathcal{C}_{R_{p}}}
\int_{\mathcal{C}_{R_{s_k}}}\cdots\int_{\mathcal{C}_{R_{s_1}}}J(\xi;s_1,\cdots,s_k)  \nonumber \\
& & \hspace{2cm}
\prod_{ \alpha <p,~\alpha \in S} S_{p\alpha} ~ f_U(\xi)|_{\substack{\xi_i=1,\\ i \in U_{P} \setminus S}}~
d\xi_{s_1}\cdots d\xi_{s_k} d\xi_{p}  \label{eq11}
\end{eqnarray}
where $S$ is any subset of $U_P$. When $S $ is empty,  $\prod_{ \alpha <p,~\alpha \in S} S_{p\alpha}$ is defined to be 1  and the integral is over only $\xi_{p}$. Now, we sum (\ref{eq11}) over $p \in U$.
Observe  that for a nonempty $S \subset U_P$ with $|S| \geq 2$,
\begin{eqnarray}
J(\xi;s_1,\cdots,s_k)
\prod_{\substack{\alpha< p,\\ \alpha \in S}} S_{p\alpha} &=&\frac{\prod_{i<j}(\xi_{s_i} - \xi_{s_j})}{\prod_{i <j}(\mu + \lambda\xi_{s_i}\xi_{s_j} - \xi_{s_j})\prod_{s \in S}(\xi_{s} -1)}(-1)^l\prod_{\substack{ s<p,\\ s\in S}}\frac{(\mu + \lambda\xi_{s}\xi_p - \xi_{s})}{(\mu + \lambda\xi_{s}\xi_p - \xi_p)} \nonumber \\
& =& J(\xi;s_1,\cdots,s_k,p)(-1)^l\frac{\prod_{s \in S}(\mu + \lambda\xi_{s}\xi_p - \xi_{s})}{\prod_{\substack{s \in S,\\ s<p}}(\xi_{s} - \xi_{p})\prod_{\substack{s \in S,\\ s>p}}(\xi_{p} - \xi_{s})}~(\xi_{p} -1) \nonumber \\
&=&J(\xi;s_1,\cdots,s_k,p)\frac{\prod_{s \in S}(\mu + \lambda\xi_{s}\xi_p - \xi_{s})}{\prod_{s \in S}(\xi_{p} - \xi_{s})}~(\xi_{p} -1). \label{Mar1}
\end{eqnarray}
Note that if $|S|=1$, the last equality in (\ref{Mar1}) is still obtained.
Hence,  the sum of (\ref{eq11}) over $p \in U$ is written as
\begin{eqnarray*}
& &\sum_{{\tilde{S}} \subset U}\sum_{|S| =k-1}\tilde{c}_{\tilde{S}}\Big(\frac{1}{2\pi i}\Big)^{k}
\int_{\mathcal{C}_{R_{s_k}}}\cdots\int_{\mathcal{C}_{R_{s_1}}}J(\xi;s_1,\cdots,s_k)  \nonumber \\
& & \hspace{2cm} \prod_{s \in S,~ p \in \tilde{S}\setminus S}\Bigg[\frac{\mu + \lambda\xi_{s}\xi_p - \xi_{s}}{\xi_{p} - \xi_{s}}\Bigg]~(\xi_{p} -1) f_U(\xi)|_{\substack{\xi_i=1,\\ i \in U\setminus \tilde{S}}}~
d\xi_{s_1}\cdots d\xi_{s_k}  \label{eq34}
\end{eqnarray*}
where $\tilde{S}= \{s_1,\cdots,s_k\}$ with $ k \geq 2$ and $S \subset \tilde{S} \subset U$. Using the identity (1.9) in \cite{TW1} for the sum over all $S$ with $|S| = k-1$ in $\tilde{S}$, we finally obtain
\begin{equation}
\sum_{\tilde{S} \subset U}c_{\tilde{S}}\Big(\frac{1}{2\pi i}\Big)^{k}
\int_{\mathcal{C}_{R_{s_k}}}\cdots\int_{\mathcal{C}_{R_{s_1}}}I(\xi;s_1,\cdots,s_k) f_U(\xi;s_1,\cdots,s_k)~
d\xi_{s_1}\cdots d\xi_{s_k} \label{final}
\end{equation}
where
\begin{equation*}
 c_{\tilde{S}} := (-1)^{|\tilde{S}|+1}~\frac{\lambda^{\Sigma_U[\tilde{S}] - |\tilde{S}|}}{\mu^{\Sigma_U[\tilde{S}] - |\tilde{S}|(|\tilde{S}|+1)/2}}.
\end{equation*}
If $S$ is empty,   $\tilde{c}_{\tilde{S}} = {c}_{\tilde{S}}$  in (\ref{eq11}), and (\ref{eq11}) becomes
\begin{equation*}
c_{\tilde{S}}
\Big(\frac{1}{2\pi i}\Big)\int_{\mathcal{C}_{R_{p}}} f_U(\xi;p)~
d\xi_{p}
\end{equation*} for each $p$. This is combined with (\ref{final}) to  complete the proof.
\end{proof}
\begin{corollary}\label{corol}
Let $1/2 <\mu < 1$ and  $S=\{s_1,\cdots,s_k\}$ be a nonempty subset of $ \{1,\cdots, N\}$ with $s_i < s_{j}~(i<j)$ and $N \geq2$. In the MADM with $N$ particles the probability that  the leftmost particle is at $x$ at time $t$ is
\begin{equation}
\mathbb{P}(x_1(t) = x) =  \sum_{S }c_S\Big(\frac{1}{2\pi i}\Big)^k
\int_{\mathcal{C}_{R_{s_k}}}\cdots\int_{\mathcal{C}_{R_{s_1}}}I(\xi;s_1,\cdots,s_k)\prod_{s \in S}\Big(\xi_{s}^{x-y_{s}-1}e^{\varepsilon(\xi_{s})t}\Big)~
d\xi_{s_1}\cdots d\xi_{s_k} \label{leftmosteq}
\end{equation}
where
$$c_S = (-1)^{|S|+1}\frac{\lambda^{\Sigma[S] - |S|}}{\mu^{\Sigma[S] - |S|(|S|+1)/2}}.
$$
and $\Sigma[S]$ is the sum of all elements in $S$.
\end{corollary}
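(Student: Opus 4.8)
The plan is to mirror the derivation of $\mathbb{P}(x_N(t)=x)$ in Section \ref{rightmost} — this is the left analogue of Proposition \ref{right} — by summing the transition probability of Theorem \ref{theorem1} over all physical configurations with the leftmost particle pinned at $x$, and then invoking Proposition \ref{propo} in place of the ad hoc identity used there.

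Concretely, I would set $x_1 = x$ and $x_i = x + z_1 + \cdots + z_{i-1}$ for $2 \le i \le N$, with $z_1,\dots,z_{N-1} \ge 0$, so that $\mathbb{P}(x_1(t)=x) = \sum_{z_1,\dots,z_{N-1}\ge 0} P_Y(X;t)$. Substituting (\ref{transition}) and using $x_i - y_{\sigma(i)} - 1 = (x - y_{\sigma(i)} - 1) + \sum_{j<i} z_j$, the monomial part factors as $\prod_i \xi_{\sigma(i)}^{x_i - y_{\sigma(i)}-1} = \bigl(\prod_l \xi_l^{x-y_l-1}\bigr)\prod_{k=2}^{N}\bigl(\prod_{i=k}^N \xi_{\sigma(i)}\bigr)^{z_{k-1}}$, where the first factor and $\prod_i e^{\varepsilon(\xi_i)t}$ are both $\sigma$-independent. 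Thus the resulting expression is exactly the left-hand side of Proposition \ref{propo} with $U = \{1,\dots,N\}$ and $f_U(\xi) = \prod_{i=1}^N \xi_i^{x - y_i - 1} e^{\varepsilon(\xi_i)t}$. This $f_U$ is a Laurent monomial times $\exp\bigl((p/\xi_i + q\xi_i - 1)t\bigr)$ in each variable, hence analytic on $\{0 < |\xi_{u_i}| < R_{u_i}\}$, so Proposition \ref{propo} applies directly.

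Then I would read off the conclusion. Since $g_U$ is the identity on $U = \{1,\dots,N\}$, one has $\Sigma_U[S] = \Sigma[S]$, so the constant $c_S$ of Proposition \ref{propo} is exactly the one in the statement. And because $\varepsilon(1) = p + q - 1 = 0$, setting $\xi_{u_i} = 1$ for $u_i \notin S$ kills the corresponding factors of $f_U$, giving $f_U(\xi; s_1,\dots,s_k) = \prod_{s\in S}\xi_s^{x-y_s-1}e^{\varepsilon(\xi_s)t}$. Substituting this into the right-hand side of Proposition \ref{propo} yields (\ref{leftmosteq}) verbatim.

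The single point that needs care — and the reason the argument is routed through Proposition \ref{propo}/Lemma \ref{lemma5} rather than carried out by hand as for the rightmost particle — is that the geometric series $\sum_{z_k}\bigl(\prod_{i\ge k+1}\xi_{\sigma(i)}\bigr)^{z_k}$ diverges pointwise on the contours $|\xi_i| = R_i > 1$, so one cannot naively interchange the configuration sum with the integrals. The configuration sum is of course a genuine convergent sum of probabilities; the analytic manipulation is legitimized exactly as for the ASEP in \cite{TW1}, by first deforming the inner contours inside the unit circle (where the series converges), summing, and deforming them back while collecting residues at $\xi_{s_l} = 1$ — which is precisely the content of the proof of Lemma \ref{lemma5}, and hence of Proposition \ref{propo}. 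Beyond invoking that proposition, the only remaining work is the exponent bookkeeping sketched above, so I expect this Corollary to be essentially immediate.
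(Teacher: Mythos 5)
Your proposal is correct and follows exactly the paper's route: the paper proves Corollary \ref{corol} precisely by choosing $U=\{1,\cdots,N\}$ and $f_U(\xi)=\prod_i\xi_i^{x-y_i-1}e^{\varepsilon(\xi_i)t}$ in Proposition \ref{propo}, with the configuration-sum/exponent bookkeeping and the contour-deformation subtlety you mention already packaged into Lemma \ref{lemma5} and Proposition \ref{propo}. Your added details (the factorization of $\prod_i\xi_{\sigma(i)}^{x_i-y_{\sigma(i)}-1}$, analyticity of $f_U$ on the punctured disks, $g_U=\mathrm{id}$ giving $\Sigma_U[S]=\Sigma[S]$, and $\varepsilon(1)=0$) are accurate fillings-in of the paper's one-line proof.
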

\begin{proof}
Choose $U = \{1,\cdots,N\}$ and $f_U(\xi) = \prod_i\xi_i^{x-y_i-1}e^{\varepsilon(\xi_i)t}$ in Proposition \ref{propo}. Then $\mathbb{P}(x_1(t) = x)$ is immediately obtained.
\end{proof}
\subsection{Main results : the position of the $m$th leftmost particle at time $t$}
\begin{theorem}\label{main}
 Let $1/2<\mu < 1$ and $U =\{u_1,\cdots,u_N\} \subset \mathbb{N}$ with $u_i < u_{j}~(i<j)$ for $N \geq 3$. Let $A_{\sigma}$ be given by (\ref{coeff}) with a permutation $\sigma$ on $U$. With $\mathcal{C}_{R_{u_i}}$ as in Theorem \ref{theorem1} and  $f_U(\xi)$ given by (\ref{f}), we have  for $1 < m <N$
\begin{eqnarray*}\label{propom}
& & \sum_{z_1,\cdots, z_{N-m} = 0}^{\infty}\sum_{v_1,\cdots, v_{m-1} = 0}^{\infty}\Big(\frac{1}{2\pi i}\Big)^N\int_{\mathcal{C}_{R_{u_N}}}\cdots\int_{\mathcal{C}_{R_{u_1}}}\sum_{\sigma \in \mathbb{S}_N }A_{\sigma} \prod_{k=1}^{m-1}\Big(\prod_{i=1}^{m-k} \xi_{\sigma(i)} \Big)^{-v_{k}}\\
& & \hspace{7cm} \prod_{k=m+1}^N\Big(\prod_{i=k}^N \xi_{\sigma(i)} \Big)^{z_{k-m}}f_U(\xi)~ d\xi_{u_1}\cdots d\xi_{u_N} \\
&=&~
 \sum_{|S| \geq m}c_S\Big(\frac{1}{2\pi i}\Big)^k
\int_{\mathcal{C}_{R_{s_k}}}\cdots\int_{\mathcal{C}_{R_{s_1}}}I(\xi;s_1,\cdots,s_k)f_U(\xi;s_1,\cdots,s_k)~
d\xi_{s_1}\cdots d\xi_{s_k}
\end{eqnarray*}
where
$$c_S = (-1)^{|S|+m}(\mu\lambda)^{m(m-1)/2}{|S| - 1 \brack |S|-m}\frac{\lambda^{\Sigma_U[S] - m|S|}}{\mu^{\Sigma_U[S] - |S|(|S|+1)/2}}
$$
and the sum runs over all $S = \{s_1,\cdots,s_k\} \subset U$ with $s_i < s_{j}~(i<j)$ and $k \geq m$.
\end{theorem}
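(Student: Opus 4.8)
The plan is to deduce the theorem from Proposition \ref{propo} by grouping the permutation sum according to which elements of $U$ occupy the first $m-1$ positions, carrying out the negative-power (the $v_k$) geometric series by the symmetrization identity (1.6) of \cite{TW1}, applying Proposition \ref{propo} to the remaining positions, and reassembling with the subset identity (1.9) of \cite{TW1}. First I would sum the series in $v_1,\dots,v_{m-1}$: since $|\xi_i|=R_i>1$ on the prescribed contours, each $\sum_{v_k\ge 0}(\prod_{i=1}^{m-k}\xi_{\sigma(i)})^{-v_k}$ converges with no change of contour, and the product of all of them equals $\prod_{j=1}^{m-1}\frac{\prod_{i=1}^{j}\xi_{\sigma(i)}}{\prod_{i=1}^{j}\xi_{\sigma(i)}-1}$, a function of $\xi_{\sigma(1)},\dots,\xi_{\sigma(m-1)}$ only. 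Put $T=\{\sigma(1),\dots,\sigma(m-1)\}$ and $W=U\setminus T$, and split $\mathbb{S}_N$ according to the unordered set $T$. The inversions of $\sigma$ with one index in $\{1,\dots,m-1\}$ and one in $\{m,\dots,N\}$ contribute $M(T,W):=\prod_{a\in T,\,b\in W,\,a>b}S_{ab}$, which depends only on the two sets, so $A_{\sigma}=A^{(1)}_{\sigma|_{T}}\,M(T,W)\,A^{(2)}_{\sigma|_{W}}$, where $A^{(1)},A^{(2)}$ are the inversion products internal to $T$ and to $W$.

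Next I would perform the sum over the orderings of $T$. Exactly as in the derivation of Proposition \ref{right}, $A^{(1)}_{\sigma|_T}$ expands as $\textrm{sgn}(\sigma|_{T})$ times $\prod_{i<j\le m-1}(\mu+\lambda\xi_{\sigma(i)}\xi_{\sigma(j)}-\xi_{\sigma(j)})$ divided by the symmetric quantity $\prod_{a<b\in T}(\mu+\lambda\xi_{a}\xi_{b}-\xi_{b})$; after factoring the symmetric monomial $\prod_{t\in T}\xi_{t}$ out of the numerator $\xi_{\sigma(1)}^{m-1}\xi_{\sigma(2)}^{m-2}\cdots\xi_{\sigma(m-1)}$ left by the $v$-sums, the remainder is precisely the sum to which identity (1.6) of \cite{TW1} applies, so the $T$-ordering sum collapses to $\mu^{(m-1)(m-2)/2}(\prod_{t\in T}\xi_{t})\,J(\xi;T)$. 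For a fixed $T$, the remaining sum over the orderings of $W$, together with the $z_k$-sums and the factor $M(T,W)f_{U}(\xi)$, is then exactly the left-hand side of Proposition \ref{propo} with ground set $W$ (of cardinality $N-m+1\ge 2$; relabelling the positions $m,m+1,\dots,N$ as $1,2,\dots,N-m+1$ turns the present $z$-products into those of Proposition \ref{propo}) and with $f_{W}(\xi):=M(T,W)f_{U}(\xi)$. The hypothesis that $f_{W}$ is analytic in each $\xi_{w}$ on $\{0<|\xi_{w}|<R_{w}\}$ holds because, by Lemma \ref{lemma1}, each pole of $S_{ab}$ with $a\in T$, $b\in W$, $a>b$, regarded as a function of $\xi_{b}$, sits at modulus $>R_{a}>R_{b}$.

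Applying Proposition \ref{propo} to $W$ turns this into a sum over nonempty $S'\subset W$ of $c^{(W)}_{S'}\,I(\xi;S')\,f_{W}(\xi;S')$, the coefficient being computed relative to $W$. Setting $\xi_{w}=1$ for $w\in W\setminus S'$ replaces each factor $S_{ab}$ with $b\in W\setminus S'$ by $\lambda/\mu$ (one checks $S_{ab}|_{\xi_{b}=1}=\lambda/\mu$) and leaves $\prod_{a\in T,\,b\in S',\,a>b}S_{ab}$, while $f_{U}$ reduces to $f_{U}(\xi;\tilde{S})$ with $\tilde{S}:=T\cup S'$. After also integrating the $T$-variables, each pair $(T,S')$ thus yields, up to explicit powers of $\lambda$ and $\mu$, an integral over $\tilde{S}$ of $(\prod_{t\in T}\xi_{t})J(\xi;T)\,I(\xi;S')\prod_{a\in T,\,b\in S',\,a>b}S_{ab}\cdot f_{U}(\xi;\tilde{S})$. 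A direct computation generalizing (\ref{Mar1}) from $|T|=1$ to $|T|=m-1$ rewrites the prefactor as $I(\xi;\tilde{S})$ times a rational function that is symmetric in the $\tilde{S}$-variables and involves $T$ only through elementary factors; summing this over all $(m-1)$-subsets $T$ of a fixed $\tilde{S}$ (tracking the change of base from $g_{W},\Sigma_{W}$ to $g_{U},\Sigma_{U}$ and the accumulated powers of $\lambda/\mu$) is carried out by identity (1.9) of \cite{TW1}, which is what produces the factor $(\mu\lambda)^{m(m-1)/2}$ and the $q$-binomial ${|\tilde{S}|-1 \brack |\tilde{S}|-m}$; collecting all powers of $\lambda$ and $\mu$ gives exactly $c_{\tilde{S}}$. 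Since $T$ ranges over $(m-1)$-subsets and $S'$ over nonempty subsets of the complement, $|\tilde{S}|$ runs over $\{m,\dots,N\}$, as asserted.

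The main obstacle is precisely this last bookkeeping: converting the $W$-relative coefficients $c^{(W)}_{S'}$, the $\mu^{(m-1)(m-2)/2}$ from the $v$-symmetrization, and the powers of $\lambda/\mu$ from the evaluations $M(T,W)|_{\xi_{w}=1}$ — once summed over the $(m-1)$-subsets $T$ — into the single coefficient $c_{\tilde{S}}$ carrying the $q$-binomial ${|\tilde{S}|-1 \brack |\tilde{S}|-m}$. Getting that $q$-binomial to emerge requires invoking identity (1.9) of \cite{TW1} in the correct form (with a telescoping argument in the spirit of Lemma \ref{lemma6} for the part not covered by it), and the algebraic identity generalizing (\ref{Mar1}) to $|T|=m-1$, though routine, must be arranged so that all factors reorganize cleanly into $I(\xi;\tilde{S})$.
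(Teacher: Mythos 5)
Your overall strategy is the same as the paper's (block-decompose the permutation sum, symmetrize one block with identity (1.6) of \cite{TW1}, feed the other block into Proposition \ref{propo}, evaluate the cross $S$-matrices at $1$, and reassemble with a generalization of (\ref{Mar1}) and identity (1.9) of \cite{TW1}), but your decomposition is genuinely different. The paper splits $U$ into the images $P$ of positions $1,\dots,m$ and $Q$ of positions $m+1,\dots,N$; since $|Q|=N-m$ while there are $N-m$ geometric sums $z_1,\dots,z_{N-m}$, Proposition \ref{propo} can only absorb $z_2,\dots,z_{N-m}$, and the leftover $z_1$-series must then be handled by Lemma \ref{lemma5} followed by the telescoping Lemma \ref{lemma6}. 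You instead split off the first $m-1$ positions ($T$, symmetrized by (1.6) into $\mu^{(m-1)(m-2)/2}(\prod_{t\in T}\xi_t)J(\xi;T)$, which is correct) and apply Proposition \ref{propo} to the block $W=U\setminus T$ of size $N-m+1\ge 2$, which matches its hypotheses exactly (the $z$-products are suffix products omitting the first $W$-position, and $f_W=M(T,W)f_U$ is analytic in each $\xi_w$ on $0<|\xi_w|<R_w$ by Lemma \ref{lemma1}, as you note). This absorbs all $N-m$ geometric sums at once and, since $S'\subset W$ is nonempty, directly produces $|\tilde S|=|T\cup S'|\ge m$ with no boundary terms; in this organization no Lemma \ref{lemma6}-type telescoping should be needed in the main proof at all (it is already internal to Proposition \ref{propo}), so your hedge about needing one suggests you have not actually pushed the final step through.

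That final step is where the genuine gap lies: the identification of the coefficient $c_{\tilde S}$, which is precisely where the paper's proof spends most of its effort, is asserted ("collecting all powers of $\lambda$ and $\mu$ gives exactly $c_{\tilde S}$") rather than performed, and it cannot be transplanted verbatim from the paper because your reassembly is transposed relative to the paper's. Concretely: in your route $J$ sits on the $T$-block (with the extra monomial $\prod_{t\in T}\xi_t$, which has no counterpart in the paper and must be absorbed by the identity you invoke) and $I$ sits on $S'$, so the analogue of (\ref{Mar1}) yields pair factors $\frac{\mu+\lambda\xi_t\xi_b-\xi_b}{\xi_t-\xi_b}$ with the special factor $\big(\prod_{b\in S'}\xi_b-1\big)$ over the \emph{complement} of the summed set, whereas the paper's computation produces $\frac{\mu+\lambda\xi_s\xi_p-\xi_s}{\xi_p-\xi_s}$ with $\big(\prod_{p\in P}\xi_p-1\big)$ and sums over the $(k-m)$-subsets; you must check that (1.9) of \cite{TW1} applies in this mirrored form over the $(m-1)$-subsets $T\subset\tilde S$ and still yields $(\mu\lambda)^{m(m-1)/2}{|\tilde S|-1 \brack |\tilde S|-m}$. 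Moreover the exponents you must track are not fixed weights on subsets of $\tilde S$: the coefficient $c^{(W)}_{S'}$ from Proposition \ref{propo} involves $\Sigma_W[S']=\sum_{b\in S'}g_W(b)$, and the ranks $g_W(b)$ in $W=U\setminus T$ themselves depend on $T$, as do the powers of $\lambda/\mu$ coming from $M(T,W)\big|_{\xi_w=1,\,w\in W\setminus S'}$; converting these to $g_U$, $\Sigma_U[\tilde S]$ and the counts of pairs $a\in T$, $b<a$ is the analogue of the paper's $l_p^P$, $l_p^S$ bookkeeping and must be done explicitly before the claimed $c_{\tilde S}$ can be concluded. Until that computation is carried out, the proof is incomplete at its decisive point, even though the route itself appears viable.
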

\begin{proof}
Let $Q  \subset U$ such that  $|Q| = N-m$, and $P = U \setminus Q= \{p_1,\cdots, p_m\}$. Let $q $ be a permutation on $Q$, and let us  denote by $\mathcal{S}_q$  a set of $\sigma$ with $\sigma(m+i) = q(i)$ so that
 \begin{equation*}
Q=\{\sigma(m+1), \cdots, \sigma(N)\}
\end{equation*}
for $\sigma \in \mathcal{S}_q$.
 Observe that
 \begin{equation*}
  \sum_{\sigma \in \mathbb{S}_N} = \sum_{\{P,Q\}}\sum_{q \in \mathbb{S}_{N-m}}\sum_{\sigma \in \mathcal{S}_q}
 \end{equation*}
 where the first sum in the right hand side is over all partitions $\{P,Q\}$  of $U$ with $|P| = m$ and $|Q| = N-m$.
 We first sum over all permutations in  $\mathcal{S}_q$. Observe that geometric series over $v_i$ converge for all $i$ and $A_{\sigma}$ is written as
\begin{equation*}
A_{\sigma}  =\prod_{\substack{i<j \leq m, \\ \sigma(i)>\sigma(j)}}S_{\sigma(i)\sigma(j)}\prod_{\substack{i \leq m <j,\\ \sigma(i)> \sigma(j)}}S_{\sigma(i)\sigma(j)}\prod_{\substack{m <i<j,\\ \sigma(i) > \sigma(j)}}S_{\sigma(i)\sigma(j)}.
\end{equation*}
Since $\sigma(m+i)$ is fixed to $ q(i)$ for each $\sigma \in \mathcal{S}_q$, we have, after summing over all $v_i$,
\begin{eqnarray}
& &\sum_{z_1,\cdots, z_{N-m} = 0}^{\infty}\Big(\frac{1}{2\pi i}\Big)^N\int_{\mathcal{C}_{R_{u_N}}}\cdots\int_{\mathcal{C}_{R_{u_1}}} \label{equation}\\
& &\hspace{1cm} \Bigg[\sum_{\sigma \in S_q }
\prod_{\substack{i<j \leq m,\\ \sigma(i) > \sigma(j)}}S_{\sigma(i)\sigma(j)}
\frac{\xi_{\sigma(1)}^{m-1}\xi_{\sigma(2)}^{m-2}\cdots\xi_{\sigma(m-1)}}
{\big(\xi_{\sigma(1)}\cdots\xi_{\sigma(m-1)}-1\big)\big(\xi_{\sigma(1)}\cdots\xi_{\sigma(m-2)}-1\big)\cdots
\big(\xi_{\sigma(1)}-1\big)}\Bigg] \nonumber \\
& & \hspace{1cm}\prod_{\substack{i \leq m <j,\\ \sigma(i) > \sigma(j)}}S_{\sigma(i)\sigma(j)}\prod_{\substack{m <i<j,\\ \sigma(i) >\sigma(j)}}S_{\sigma(i)\sigma(j)}\prod_{k=m+1}^N\Big(\prod_{i=k}^N \xi_{\sigma(i)} \Big)^{z_{k-m}}f_U(\xi)~ d\xi_{u_1}\cdots d\xi_{u_N}, \nonumber
\end{eqnarray}
where  the sum over $\sigma \in \mathcal{S}_q$ is equal to
$$\mu^{m(m-1)/2}~I(\xi;\sigma(1),\cdots, \sigma(m))$$
 by the identity (1.6) in \cite{TW1}. Now, we sum over $q \in \mathbb{S}_{N-m}$. Let us write the sum of (\ref{equation}) over $q \in \mathbb{S}_{N-m}$ as
\begin{eqnarray*}
& &\mu^{m(m-1)/2}\sum_{z_1= 0}^{\infty}\sum_{z_2,\cdots, z_{N-m} = 0}^{\infty}\Big(\frac{1}{2\pi i}\Big)^N\int_{\mathcal{C}_{R_{u_N}}}\cdots \int_{\mathcal{C}_{R_{u_1}}}\\
& &\hspace{1cm}\sum_{q \in \mathbb{S}_{N-m}}\prod_{\substack{m <i<j,\\ \sigma(i) > \sigma(j)}}S_{\sigma(i)\sigma(j)} ~\prod_{k=m+2}^N\Big(\prod_{i=k}^N \xi_{\sigma(i)} \Big)^{z_{k-m}} \\
& &\hspace{1.5cm}I(\xi;p_1,\cdots, p_m)\prod_{i=1}^{N-m}\Big(\xi_{\sigma(m+i)}\Big)^{z_1}~f_U(\xi)\prod_{\substack{i \leq m <j,\\ \sigma(i) > \sigma(j)}}S_{\sigma(i)\sigma(j)}~ d\xi_{u_1}\cdots d\xi_{u_N}.
\end{eqnarray*}
Observe that
\begin{equation*}
 I(\xi;p_1,\cdots, p_m)\prod_{i=1}^{N-m}\Big(\xi_{\sigma(m+i)}\Big)^{z_1}~f_U(\xi)\prod_{\substack{i \leq m <j,\\ \sigma(i) > \sigma(j)}}S_{\sigma(i)\sigma(j)}
\end{equation*}
as a function of  $\xi_{\sigma(m+1)},\cdots, \xi_{\sigma(N)}$,
is analytic in the set $\{\xi_{\sigma(m+i)} \in \mathbb{C} : 0< |\xi_{\sigma(m+i)}| < R_{\sigma(m+i)}\} $
 for each $\xi_{\sigma(m+i)}~(i=1,\cdots,N-m)$ by the assumption on $f_U(\xi)$ and Lemma \ref{lemma1}. Hence, we may apply  Proposition \ref{propo} to the multiple integral over $\xi_{\sigma(m+1)},\cdots, \xi_{\sigma(N)}$. That is, if we change the order of integration which is permissible by Fubini's theorem,  we have
 \begin{eqnarray*}
& &\mu^{m(m-1)/2}\Big(\frac{1}{2\pi i}\Big)^N\int_{\mathcal{C}_{R_{p_m}}}\cdots \int_{\mathcal{C}_{R_{p_1}}}\sum_{z_1=0}^{\infty}\Bigg[\sum_{z_2,\cdots, z_{N-m} = 0}^{\infty}\int_{\mathcal{C}_{R_{\sigma(m+1)}}}\cdots \int_{\mathcal{C}_{R_{\sigma(N)}}} \\
& & ~
\sum_{q \in \mathbb{S}_{N-m}}\prod_{\substack{m <i<j,\\ \sigma(i) > \sigma(j)}}S_{\sigma(i)\sigma(j)}\prod_{k=m+2}^N\Big(\prod_{i=k}^N \xi_{\sigma(i)} \Big)^{z_{k-m}}
I(\xi;p_1,\cdots, p_m)\prod_{i=1}^{N-m}\Big(\xi_{\sigma(m+i)}\Big)^{z_1}~f_U(\xi)\\
& &\hspace{5.5cm}~\prod_{\substack{i \leq m <j,\\ \sigma(i) > \sigma(j)}}S_{\sigma(i)\sigma(j)}~ d\xi_{\sigma(N)}\cdots d\xi_{\sigma(m+1)}\Bigg]d\xi_{p_1}\cdots d\xi_{p_m},
\end{eqnarray*}
and applying Proposition \ref{propo}, we have
\begin{eqnarray}
&  &\mu^{m(m-1)/2}\Big(\frac{1}{2\pi i}\Big)^{m+k}\int_{\mathcal{C}_{R_{p_m}}}\cdots \int_{\mathcal{C}_{R_{p_1}}}\sum_{z_1=0}^{\infty} \sum_{S \subset Q}c_S\int_{\mathcal{C}_{R_{s_k}}}\cdots \int_{\mathcal{C}_{R_{s_1}}}\nonumber \\
& & \hspace{2cm}I(\xi;s_1,\cdots,s_k)\prod_{\substack{i \leq m <j,\\ \sigma(i)> \sigma(j) \in S}}S_{\sigma(i)\sigma(j)} ~\Big[\prod_{\substack{i \leq m <j,\\ \sigma(i)> \sigma(j) \in Q\setminus S}}S_{\sigma(i)\sigma(j)}\Big]_{\xi_{\sigma(j)}=1} \prod_{s \in S}\xi_{s}^{z_1}~\nonumber  \\
& & \hspace{2cm} I(\xi;p_1,\cdots, p_m)\big[f_U(\xi)\big]_{\substack{\xi_i = 1,\\ i \in Q \setminus S}}d\xi_{s_1}\cdots d\xi_{s_k}d\xi_{p_1}\cdots d\xi_{p_m} \label{eq28}
\end{eqnarray}
where $S = \{s_1,\cdots,s_k\}$ with $s_i < s_{j}~(i<j)$. Note that $c_S$ in (\ref{eq28}) is defined  for $S \subset Q$. For $p \in P$, let us denote the number of elements in the set $\{\alpha : \alpha \leq  p,~ \alpha \in P\}$ by $l_p^P$ and the number of elements in the set $\{\alpha: \alpha < p,~ \alpha \in S \subset Q\}$ by $l_p^S$. The number of elements in the set $\{\alpha: \alpha < p,~ \alpha \in Q\} = g_U(p) - l_p^P$ and hence
\begin{equation*}
\Big[\prod_{\substack{i \leq m <j,\\ \sigma(i)>\sigma(j) \in Q\setminus S}}S_{\sigma(i)\sigma(j)}\Big]_{\xi_{\sigma(j)}=1}
 = \Big(\frac{\lambda}{\mu}\Big)^{\sum_{p \in P}\big(g_U(p) - l_p^P - l_p^S\big)}.
\end{equation*}
 Hence, (\ref{eq28}) becomes
\begin{eqnarray*}
&  &\mu^{m(m-1)/2} \Big(\frac{1}{2\pi i}\Big)^{m+k}\sum_{S \subset Q}c_S \Big(\frac{\lambda}{\mu}\Big)^{\sum_{p \in P}\big(g_U(p) - l_p^P - l_p^S\big)}\sum_{z_1=0}^{\infty} \int_{\mathcal{C}_{R_{p_m}}}\cdots \int_{\mathcal{C}_{R_{p_1}}} \int_{\mathcal{C}_{R_{s_k}}}\cdots \int_{\mathcal{C}_{R_{s_1}}}\nonumber \\
& &I(\xi;s_1,\cdots,s_k) ~\prod_{\substack{i \leq m <j,\\ \sigma(i)> \sigma(j) \in S}}S_{\sigma(i)\sigma(j)} \prod_{s \in S}\xi_{s}^{z_1}~I(\xi;p_1,\cdots, p_m)\big[f_U(\xi)\big]_{\substack{\xi_i = 1,\\ i \in Q \setminus S}}d\xi_{s_1}\cdots d\xi_{s_k}d\xi_{p_1}\cdots d\xi_{p_m}.
\end{eqnarray*}
Now, we may apply Lemma \ref{lemma5} since
\begin{equation*}
\prod_{\substack{i \leq m <j,\\ \sigma(i)> \sigma(j) \in S}}S_{\sigma(i)\sigma(j)}
 ~I(\xi;p_1,\cdots, p_m)\big[f_U(\xi)\big]_{\substack{\xi_i = 1,\\ i \in Q \setminus S}}
\end{equation*}
 satisfies the condition for the $f_U$ in Lemma \ref{lemma5}. Hence, observing that
 \begin{equation*}
 \Big[\prod_{\substack{i \leq m <j,\\ \sigma(i) >\sigma(j) \in S}}S_{\sigma(i)\sigma(j)}\Big]_{\xi_{\sigma(j)}=1}  =\Big(\frac{\lambda}{\mu}\Big)^{\sum_{p \in P} l_p^S },
 \end{equation*}
  we have
\begin{eqnarray}
&- & \mu^{m(m-1)/2}\Big(\frac{1}{2\pi i}\Big)^{m+k}\sum_{S \subset Q}c_S\Big(\frac{\lambda}{\mu}\Big)^{\sum_{p \in P}\big(g_U(p) - l_p^P - l_p^S\big)}  \nonumber \\
& &
\int_{\mathcal{C}_{R_{p_m}}}\cdots \int_{\mathcal{C}_{R_{p_1}}}\int_{\mathcal{C}_{R_{s_k}}}\cdots\int_{\mathcal{C}_{R_{s_1}}} J(\xi;s_1,\cdots,s_k)\nonumber\\
& &
\hspace{1cm} \prod_{\substack{i \leq m <j,\\ \sigma(i) >\sigma(j) \in S}}S_{\sigma(i)\sigma(j)}
 ~I(\xi;p_1,\cdots, p_m)\big[f_U(\xi)\big]_{\substack{\xi_i = 1,\\ i \in Q \setminus S}}
d\xi_{s_1}\cdots d\xi_{s_k} d\xi_{p_1}\cdots d\xi_{p_m} \nonumber \\
&+ & \mu^{m(m-1)/2}\Big(\frac{1}{2\pi i}\Big)^{m}\sum_{S \subset Q}c_S\Big(\frac{\lambda}{\mu}\Big)^{\sum_{p \in P}\big(g_U(p) - l_p^P \big)} \Big(\frac{1}{\mu}\Big)^{k(k-1)/2}
  \nonumber \\
 & & ~~~~~\int_{\mathcal{C}_{R_{p_m}}}\cdots \int_{\mathcal{C}_{R_{p_1}}}I(\xi;p_1,\cdots, p_m)\big[f_U(\xi)\big]_{\substack{\xi_i = 1,\\ i \in Q }}~
d\xi_{p_1}\cdots d\xi_{p_m} \label{jan30}
\end{eqnarray}
by  Lemma \ref{lemma5}.
In the second sum of (\ref{jan30}),
$$
c_S\Big(\frac{\lambda}{\mu}\Big)^{\sum_{p \in P}\big(g_U(p) - l_p^P \big)} \Big(\frac{1}{\mu}\Big)^{k(k-1)/2} = (-1)^{|S|+1}\Big(\frac{\lambda}{\mu}\Big)^{\Sigma_Q[S] - |S|+\sum_{p \in P}\big(g_U(p) - l_p^P \big)},
$$
and  the second sum is written as
\begin{eqnarray*}
  & &\mu^{m(m-1)/2}\Big(\frac{1}{2\pi i}\Big)^{m} \Big(\frac{\lambda}{\mu}\Big)^{\sum_{p \in P}\big(g_U(p) - l_p^P \big)}\int_{\mathcal{C}_{R_{p_m}}}\cdots \int_{\mathcal{C}_{R_{p_1}}} \\
  & & \hspace{4cm}
  I(\xi;p_1,\cdots, p_m)\big[f_U(\xi)\big]_{\substack{\xi_i = 1,\\ i \in Q }}~
d\xi_{p_1}\cdots d\xi_{p_m}
\end{eqnarray*}
by Lemma \ref{lemma6}.
Now we  express $-c_S\Big(\frac{\lambda}{\mu}\Big)^{\sum_{p \in P}\big(g_U(p) - l_p^P - l_p^S\big)}  $ in the first sum of (\ref{jan30})
in terms of $\tilde{S} := P \cup S$. Observing that $\sum_{p \in P}l_p^P = m(m+1)/2$, we have
\begin{eqnarray*}
-c_S\Big(\frac{\lambda}{\mu}\Big)^{\sum_{p \in P}\big(g_U(p) - l_p^P - l_p^S\big)}  &=& (-1)^{|S|}\frac{\lambda^{\Sigma_Q[S] - |S|+\sum_{p \in P}\big(g_U(p) - l_p^P - l_p^S\big)}}{\mu^{\Sigma_Q[S] - |S|(|S|+1)/2+\sum_{p \in P}\big(g_U(p) - l_p^P - l_p^S\big)}} \\
 &=&(-1)^{|S|}\frac{\lambda^{\sum_i^k g_{Q}(s_i) -k+\sum_{p \in P}\big(g_U(p) - l_p^P - l_p^S\big)}}{\mu^{\sum_i^k g_{Q}(s_i) - k(k+1)/2+\sum_{p \in P}\big(g_P(p) - l_p^P - l_p^S\big)}} \\
 &=&(-1)^{|S|}\frac{\lambda^{\sum_i^k g_{Q}(s_i) +(mk-\sum_{p\in P}l_p^S) -(m+1)k+\sum_{p \in P}\big(g_U(p) - l_p^P \big)}}{\mu^{\sum_i^k g_{Q}(s_i) +(mk-\sum_{p\in P}l_p^S)- mk- k(k+1)/2+\sum_{p \in P}\big(g_U(p) - l_p^P \big)}} \\
 & =& (-1)^{|S|}\frac{\lambda^{\sum_i^k g_{Q}(s_i) +\sum_{p \in P} g_U(p)+ \sum_{p\in P}(k-l_p^S) -(m+1)(k+m)+m(m+1)/2}}{\mu^{\sum_i^k g_{Q}(s_i) +\sum_{p \in P} g_U(p)+\sum_{p\in P}(k-l_p^S)-(m+k)(m+k+1)/2}} \\
 &=&(-1)^{|\tilde{S}|-m}\frac{\lambda^{\Sigma_U[\tilde{S}] - (m+1)|\tilde{S}|+m(m+1)/2}}{\mu^{\Sigma_U[\tilde{S}] - |\tilde{S}|(|\tilde{S}|+1)/2}} := \tilde{c}_{\tilde{S}}.
\end{eqnarray*}
Hence,  (\ref{jan30}) is written as a single sum
\begin{eqnarray}
& &\mu^{m(m-1)/2}\Big(\frac{1}{2\pi i}\Big)^{|\tilde{S}|}\sum_{{S} \subset Q}\tilde{c}_{\tilde{S}}\int_{\mathcal{C}_{R_{p_m}}}\cdots\int_{\mathcal{C}_{R_{p_1}}}
\int_{\mathcal{C}_{R_{s_k}}}\cdots\int_{\mathcal{C}_{R_{s_1}}}J(\xi;s_1,\cdots,s_k)  \nonumber \\
& &
\prod_{\substack{i \leq m <j,\\ \sigma(i)>\sigma(j) \in S}}S_{\sigma(i)\sigma(j)} ~ I(\xi;p_1,\cdots, p_m)\big[f_U(\xi)\big]_{\substack{\xi_i = 1,\\ i \in Q \setminus S}}
d\xi_{s_1}\cdots d\xi_{s_k} d\xi_{p_1}\cdots d\xi_{p_m}  \label{jan301}
\end{eqnarray}
where  $S$  is any subset of $Q$. When $S$ is empty, $\prod_{\substack{i \leq m <j,\\ \sigma(i)> \sigma(j) \in S}}S_{\sigma(i)\sigma(j)}$ is defined to be 1 and  the integral is only over $\xi_{p_1}, \cdots, \xi_{p_m}$. Now, we sum (\ref{jan301}) over all partitions $\{P,Q\}$ of $U$.
Notice that if $S$ is nonempty, then
$$\prod_{\substack{i \leq m <j,\\ \sigma(i) > \sigma(j) \in S}}S_{\sigma(i)\sigma(j)} = \prod_{\substack{s < p,\\p\in P, s \in S}}S_{ps},$$
 and
\begin{eqnarray*}
& & J(\xi;s_1,\cdots,s_k)~I(\xi;p_1,\cdots, p_m)
\prod_{\substack{s < p,\\p\in P, s \in S}}S_{ps} \\
&=&\frac{\prod_{i<j}(\xi_{s_i} - \xi_{s_j})}{\prod_{i<j}(\mu + \lambda\xi_{s_i}\xi_{s_j} - \xi_{s_j})\prod_{s \in S}(\xi_{s} -1)}\frac{\prod_{p_i<p_j}(\xi_{p_i} - \xi_{p_j})\big(\prod_{p\in P}\xi_p -1\big)}{\prod_{p_i<p_j}(\mu + \lambda\xi_{p_i}\xi_{p_j} - \xi_{p_j})\prod_{p \in P}(\xi_{p} -1)}\\
& & \hspace{3cm}\times ~(-1)^{\sum_{p\in P}l_p^S}\prod_{\substack{s < p,\\p\in P, s \in S}}\frac{(\mu + \lambda\xi_{s}\xi_p - \xi_{s})}{(\mu + \lambda\xi_{s}\xi_p - \xi_p)}\\
& =& J(\xi;s_1,\cdots,s_k,p_1,\cdots,p_m)(-1)^{\sum_{p\in P}l_p^S} \Big(\prod_{p\in P}\xi_p -1\Big)\\
& & \hspace{2cm}\times \frac{\prod_{\substack{s < p,\\p\in P, s \in S}}(\mu + \lambda\xi_{s}\xi_p - \xi_{s})\prod_{\substack{s > p,\\p\in P, s \in S}}(\mu + \lambda\xi_{s}\xi_p - \xi_{s})}{\prod_{\substack{s < p,\\p\in P, s \in S}}(\xi_{s} - \xi_{p})\prod_{\substack{s > p,\\p\in P, s \in S}}(\xi_{p} - \xi_{s})} \\
&=&J(\xi;s_1,\cdots,s_k,p_1,\cdots,p_m)\frac{\prod_{s \in S, p\in P}(\mu + \lambda\xi_{s}\xi_p - \xi_{s})}{\prod_{s \in S, p \in P}(\xi_{p} - \xi_{s})}\Big(\prod_{p\in P}\xi_p -1\Big).
\end{eqnarray*}
Hence  the sum of (\ref{jan301}) over all partitions $\{P,Q\}$ of $U$ is written as
\begin{eqnarray}
& &\mu^{m(m-1)/2}\sum_{{\tilde{S}} \subset U}\sum_{|S|=k-m}\tilde{c}_{\tilde{S}}\Big(\frac{1}{2\pi i}\Big)^{k}
\int_{\mathcal{C}_{R_{s_k}}}\cdots\int_{\mathcal{C}_{R_{s_1}}}J(\xi;s_1,\cdots,s_k)  \nonumber \\
& & \hspace{1cm}\prod_{s \in S, p \in \tilde{S}\setminus S}\frac{(\mu + \lambda\xi_{s}\xi_p - \xi_{s})}{(\xi_{p} - \xi_{s})} \Big(\prod_{p\in \tilde{S}\setminus S}\xi_p -1\Big)f_U(\xi)|_{\substack{\xi_i=1,\\ i \in U\setminus \tilde{S}}}~
d\xi_{s_1}\cdots d\xi_{s_k} \nonumber \label{eq14}
\end{eqnarray}
where $\tilde{S}= \{s_1,\cdots,s_k\}$  with $k \geq m+1 $ and $S \subset \tilde{S} \subset U$. By using the identity (1.9) in \cite{TW1} for the sum over all $S$ with $|S| = k-m$, we obtain
\begin{equation}
\sum_{\tilde{S} \subset U}c_{\tilde{S}}\Big(\frac{1}{2\pi i}\Big)^{k}
\int_{\mathcal{C}_{R_{s_k}}}\cdots\int_{\mathcal{C}_{R_{s_1}}}I(\xi;s_1,\cdots,s_k) f_U(\xi;s_1,\cdots,s_k)~
d\xi_{s_1}\cdots d\xi_{s_k}\label{Final123}
\end{equation}
where
\begin{equation*}
 c_{\tilde{S}} = (-1)^{|\tilde{S}|+ m}(\mu\lambda)^{m(m-1)/2}~{k - 1 \brack k-m}\frac{\lambda^{\Sigma_U[\tilde{S}] - m|\tilde{S}|}}{\mu^{\Sigma_U[\tilde{S}] - |\tilde{S}|(|\tilde{S}|+1)/2}}.
\end{equation*}
If $S$ is empty, then   $\mu^{m(m-1)/2}\tilde{c}_{\tilde{S}}=c_{\tilde{S}}$ in (\ref{jan301}), and (\ref{jan301}) becomes
\begin{equation*}
c_{\tilde{S}}\Big(\frac{1}{2\pi i} \Big)^m \int_{\mathcal{C}_{R_{p_m}}}\cdots\int_{\mathcal{C}_{R_{p_1}}} I(\xi;p_1,\cdots,p_m)~f_U(\xi;p_1,\cdots,p_m)~d\xi_{p_1}\cdots d\xi_{p_m}
\end{equation*}
 for each $P$.
 This is combined with (\ref{Final123}) to complete the proof.
\end{proof}
\begin{theorem}\label{main2}
Let $1/2 < \mu < 1$ and  $S=\{s_1,\cdots,s_k\}$ be a nonempty subset of $ \{1,\cdots, N\}$ with $s_i < s_{j}~(i<j)$ and $N \geq2$. In the MADM with $N$ particles the probability that the $m$th leftmost particle  is at $x$ at time $t$ is
\begin{equation}
\mathbb{P}(x_m(t) =x ) =
 \sum_{|S| \geq m}c_S \Big(\frac{1}{2\pi i}\Big)^k
\int_{\mathcal{C}_{R_{s_k}}}\cdots\int_{\mathcal{C}_{R_{s_1}}}I(\xi;s_1,\cdots,s_k)\prod_{s\in S}\xi_{s}^{x-y_s-1}e^{\varepsilon(\xi_s)t}~
d\xi_{s_1}\cdots d\xi_{s_k} \label{maineq}
\end{equation}
where
$$c_S = (-1)^{|S|+m}(\mu\lambda)^{m(m-1)/2}{|S| - 1 \brack |S|-m}\frac{\lambda^{\Sigma[S] - m|S|}}{\mu^{\Sigma[S] - |S|(|S|+1)/2}}
$$
and $\Sigma[S]$ is the sum of all elements in $S$.
\end{theorem}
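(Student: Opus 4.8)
The plan is to split into three cases by the value of $m$ and to read off each from a result already established: Corollary \ref{corol} for $m=1$, Theorem \ref{main} for $1<m<N$, and Proposition \ref{right} for $m=N$.

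The common starting point is
\[
\mathbb{P}(x_m(t)=x)=\sum_{\substack{X\in\mathbb{Z}^N,\ x_1\le\cdots\le x_N\\ x_m=x}}P_Y(X;t),
\]
an absolutely convergent sum since $\sum_X P_Y(X;t)=1$. I would parametrize the configurations with $x_m=x$ by $x_{m-k}=x-v_1-\cdots-v_k$ ($1\le k\le m-1$) and $x_{m+k}=x+z_1+\cdots+z_k$ ($1\le k\le N-m$) with all $v_k,z_k\ge 0$. Substituting (\ref{transition}) and using that a permutation $\sigma$ only relabels the product, the $x$-dependent factor collapses to $\prod_j\xi_j^{x-y_j-1}$, independent of $\sigma$, exactly as in the computation preceding Proposition \ref{right}. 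A short bookkeeping step shows that $v_\ell$ contributes $\big(\prod_{i=1}^{m-\ell}\xi_{\sigma(i)}\big)^{-v_\ell}$ and $z_\ell$ contributes $\big(\prod_{i=m+\ell}^{N}\xi_{\sigma(i)}\big)^{z_\ell}$, so that with $U=\{1,\dots,N\}$ and $f_U(\xi)=\prod_i\xi_i^{x-y_i-1}e^{\varepsilon(\xi_i)t}$ the sum becomes precisely the left-hand side in Theorem \ref{main} (resp.\ in the analogous identities behind Corollary \ref{corol} and Proposition \ref{right}, where one of the two parametrizations is empty). No new analytic input is required here: the exchange of the sums over $v$ and $z$ with the contour integrals is exactly what is carried out inside the proofs of Lemma \ref{lemma5}, Proposition \ref{propo} and Theorem \ref{main}, the $v$-series converging already on the contours $|\xi_i|=R_i>1$ and the $z$-series being summed after the contour deformations made there.

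For $1<m<N$ I would then simply invoke Theorem \ref{main}. To match its right-hand side with (\ref{maineq}), note that $\varepsilon(1)=p+q-1=0$, so setting $\xi_{u_i}=1$ for $u_i\notin S$ gives $f_U(\xi;s_1,\dots,s_k)=\prod_{s\in S}\xi_s^{x-y_s-1}e^{\varepsilon(\xi_s)t}$, and $\Sigma_U[S]=\Sigma[S]$ since $g_U$ is the identity on $\{1,\dots,N\}$; the coefficient of Theorem \ref{main} is then exactly the $c_S$ in the statement. For $m=1$, formula (\ref{maineq}) is Corollary \ref{corol}: with $m=1$ one has $(\mu\lambda)^{m(m-1)/2}=1$ and ${|S|-1\brack|S|-1}=1$, so $c_S$ reduces to $(-1)^{|S|+1}\lambda^{\Sigma[S]-|S|}/\mu^{\Sigma[S]-|S|(|S|+1)/2}$. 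For $m=N$ the only admissible $S$ is $\{1,\dots,N\}$; plugging $\Sigma[S]=N(N+1)/2$ into $c_S$ gives ${N-1\brack0}=1$, a vanishing exponent of $\mu$, and exponent $-N(N-1)/2$ of $\lambda$, hence $c_S=(\mu\lambda)^{N(N-1)/2}\lambda^{-N(N-1)/2}=\mu^{N(N-1)/2}$, so (\ref{maineq}) collapses to the formula of Proposition \ref{right}. Since $N\ge2$, the three cases cover all $m\in\{1,\dots,N\}$, and the theorem follows.

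The work is essentially bookkeeping, and that is where the only real obstacle lies: lining up the exponents of $v_k$ and $z_k$ with the double product appearing in Theorem \ref{main}, and verifying that the single coefficient $c_S$ in (\ref{maineq}) degenerates correctly at $m=1$ and $m=N$ — in particular that the $q$-binomial ${|S|-1\brack|S|-m}$ and the powers of $\lambda$ and $\mu$ all reduce as claimed. Everything analytic (absolute convergence of the configuration sum, and the exchange of summation and integration) is already supplied by the results cited above, so no further estimates are needed.
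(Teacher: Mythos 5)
Your proposal is correct and takes essentially the same route as the paper: its proof of Theorem \ref{main2} likewise reduces the cases $m=1$ and $m=N$ to Corollary \ref{corol} and Proposition \ref{right}, and handles $1<m<N$ by choosing $U=\{1,\cdots,N\}$ and $f_U(\xi)=\prod_i\xi_i^{x-y_i-1}e^{\varepsilon(\xi_i)t}$ in Theorem \ref{main}. You merely make explicit the configuration parametrization and the degeneration of $c_S$ at $m=1$ and $m=N$, which the paper states tersely.
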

\begin{proof}
 When $m=1$ and $m=N$, (\ref{maineq}) is equal to (\ref{rightmosteq}) and (\ref{leftmosteq}), respectively.  When $ 3\leq N$, choose $U = \{1,\cdots,N\}$ and $f_U(\xi) = \prod_i\xi_i^{x-y_i-1}e^{\varepsilon(\xi_i)t}$ in Theorem \ref{main} for $1 <m <N$.
\end{proof}

\noindent Let  us denote the number of particles in $(-\infty,x]$ at time $t$ by $\mathcal{T}(x,t)$, which is called the \textit{total current} \cite{TW5}. The position of the $m$th leftmost particle at time $t$ is related to the total current. That is, from the fact that
\begin{equation*}
\{ \mathcal{T}(x,t) = m\} = \{x_m(t) \leq x, ~x_{m+1}(t)>x \}
\end{equation*}
it can be shown \cite{TW5} that
\begin{equation*}
\mathbb{P}( \mathcal{T}(x,t) \leq  m-1) = 1 -\mathbb{P}(x_m(t) \leq x).
\end{equation*}
Hence  summing (\ref{maineq}) over $x$ from $-\infty$ to $x$ to obtain $\mathbb{P}(x_m(t) \leq x)$, we have
\begin{eqnarray}
& & \mathbb{P}( \mathcal{T}(x,t) \leq  m-1)  \label{maineqcurrent} \\
&= &
 1-\sum_{|S| \geq m}c_S \Big(\frac{1}{2\pi i}\Big)^k
\int_{\mathcal{C}_{R_{s_k}}}\cdots\int_{\mathcal{C}_{R_{s_1}}}J(\xi;s_1,\cdots,s_k)\prod_{s\in S}\xi_{s}^{x-y_s}e^{\varepsilon(\xi_s)t}~
d\xi_{s_1}\cdots d\xi_{s_k}. \nonumber
\end{eqnarray}
\section{The two-sided PushASEP}\label{push}
The MADM is closely related to  the \textit{two-sided} PushASEP. In the \textit{one-sided} PushASEP the {pushing effect} is totally asymmetric and it is possible to express the transition probability as a determinant of contour integrals \cite{Borodin2} as in the TASEP. In the \textit{two-sided} PushASEP, unlike the \textit{one-sided} PushASEP,  the {pushing effect} is partially asymmetric, that is, the pushing effect is allowed in both directions. The dynamics of the \textit{two-sided} PushASEP with $N$ particles is as follows \cite{Ali2}: each particle is equipped with $2N$ Poisson clocks with rates $pr_n$ and $ql_n~ (p+q=1, 1\leq n\leq N),$ where $r_n$ and $l_n$ are given by (\ref{RightRate})  and (\ref{LeftRate1}), respectively. All clocks are independent and each site can be occupied by at most one particle. If the nearest empty site on the right (left) is $x + n~(x-n)$ when a clock with rate $pr_n ~(ql_n)$ of a particle at $x$ rings, then the particle at $x$ jumps to $x + n~(x-n)$. Otherwise, nothing happens and  the clock resumes. As mentioned in \cite{Borodin2} the jump to the right (left) of a particle to the nearest right (left) empty site is also interpreted as the particle's pushing all its right (left) neighboring particles by one if they prevent the particle from jumping to the right (left). The transition probability of the \textit{two-sided} PushASEP was treated in \cite{Lee2}, however, its proof is unfortunately false because the proof used Lemma 2.2 in \cite{TW1} which turned out to be false\footnote{The authors corrected the error in the erratum \cite{TW1}.}. In this section we  obtain the transition probability and the probability that the $m$th leftmost particle is at $x$ at time $t$ for the \textit{two-sided} PushASEP as  corollaries to Theorem \ref{theorem1} and Theorem \ref{main2}, respectively. In the PushASEP the physical region is
\begin{equation*}
\{(x_1,\cdots, x_N) \in \mathbb{Z}^N : x_1 < \cdots < x_N\}
\end{equation*}
because of the exclusion property. The $S$-matrix of the \textit{two-sided} PushASEP \cite{Ali2,Lee2} is defined to be
\begin{equation*}
S_{\beta\alpha}^{\dag} :=-\frac{\xi_{\beta}}{\xi_{\alpha}}\cdot\frac{\mu + \lambda \xi_{\alpha}{\xi_{\beta}} - \xi_{\alpha}}{\mu + \lambda \xi_{\alpha}{\xi_{\beta}} - \xi_{\beta}}=\frac{\xi_{\beta}}{\xi_{\alpha}} \cdot S_{\beta\alpha} \label{SmatrixofPush}
\end{equation*}
where $S_{\beta\alpha}$ is the $S$-matrix of the MADM and the Bethe ansatz solution is
\begin{equation*}
\sum_{\sigma \in \mathbb{S}_N}A_{\sigma}^{\dag}\prod_{i}^N\xi_{\sigma(i)}^{x_i}
\end{equation*}
with
\begin{equation*}
A_{\sigma}^{\dag} = \prod_{\substack{i<j,\\ \sigma(i) > \sigma(j)}} \frac{\xi_{\sigma(i)}}{\xi_{\sigma(j)}}\cdot S_{\sigma(i)\sigma(j)}.
\end{equation*}
Recalling Lemma 3 in \cite{Lee2}
\begin{equation*}
 \prod_{\substack{i<j,\\ \sigma(i) > \sigma(j)}} \frac{\xi_{\sigma(i)}}{\xi_{\sigma(j)}} = \prod_i \xi_{\sigma(i)}^{{\sigma(i)} -i},
\end{equation*}
we have the following transition probability of the \textit{two-sided} PushASEP.
\begin{corollary}
Let $\lambda + \mu=1,~ \frac{1}{2} < \mu  \leq 1$ and $\mathcal{C}_{R_i}~ (i=1,\cdots, N)$ be a circle oriented counterclockwise, centered at 0 with radius $R_i$. Assume that $1< R_{1} <\cdots < R_{N} < c$ where
\begin{equation*}
c = \begin{cases}
\frac{\mu}{\lambda} & \textrm{if}~~\lambda \neq 0\\
\infty &\textrm{if}~~ \lambda =0.
\end{cases}
\end{equation*}
 The transition probability of the \textit{two-sided} PushASEP with $N \geq 2$ particles is
\begin{equation}
P_Y(X;t) = \sum_{\sigma \in \mathbb{S}_N}\Big(\frac{1}{2\pi i}\Big)^N\int_{\mathcal{C}_{R_N}}\cdots\int_{\mathcal{C}_{R_1}}A_{\sigma}^{\dag}\prod_i^N \Big(\xi_{\sigma(i)}^{x_i - y_{\sigma(i)}-1}e^{\varepsilon(\xi_i)t}\Big)~d\xi_1\cdots \xi_N.\label{transitionPush}
\end{equation}
\end{corollary}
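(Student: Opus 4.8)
The plan is to deduce the corollary from Theorem~\ref{theorem1} by the change of variables that identifies the \emph{two-sided} PushASEP with the MADM. First I would use Lemma~3 of \cite{Lee2}, quoted just above, to rewrite the Bethe coefficient as
\begin{equation*}
A_\sigma^{\dag} = \Big(\prod_{i=1}^N \xi_{\sigma(i)}^{\sigma(i)-i}\Big) A_\sigma ,
\end{equation*}
so that the integrand of (\ref{transitionPush}) becomes
\begin{equation*}
A_\sigma^{\dag}\prod_{i=1}^N \xi_{\sigma(i)}^{x_i - y_{\sigma(i)}-1}e^{\varepsilon(\xi_i)t} = A_\sigma \prod_{i=1}^N \xi_{\sigma(i)}^{(x_i-i)-(y_{\sigma(i)}-\sigma(i))-1}e^{\varepsilon(\xi_i)t}.
\end{equation*}
Introducing the shifted configurations $\tilde X = (x_1-1,\dots,x_N-N)$ and $\tilde Y = (y_1-1,\dots,y_N-N)$, the right-hand side of (\ref{transitionPush}) is then literally the right-hand side of (\ref{transition}) with $(X,Y)$ replaced by $(\tilde X,\tilde Y)$; note that the contours $\mathcal{C}_{R_i}$ and the energy factors $e^{\varepsilon(\xi_i)t}$ are untouched. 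In other words, denoting by $P_Y(X;t)$ the claimed formula (\ref{transitionPush}) for the PushASEP and by $P^{\mathrm{MADM}}$ the transition probability (\ref{transition}), we have $P_Y(X;t) = P^{\mathrm{MADM}}_{\tilde Y}(\tilde X;t)$.

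Next I would record that $X \mapsto \tilde X$ is a bijection of the PushASEP physical region $\{x_1 < \cdots < x_N\}$ onto the MADM physical region $\{\tilde x_1 \le \cdots \le \tilde x_N\}$, since $x_i < x_{i+1}$ in $\mathbb{Z}$ is equivalent to $x_i - i \le x_{i+1} - (i+1)$. Under this substitution the \emph{bulk} evolution of the two-sided PushASEP is exactly (\ref{master}) — when no particle is blocked, only the $n=1$ clocks act and each particle performs an independent $p/q$ random walk in both models — and the pushing boundary condition of the two-sided PushASEP (the analogue of (\ref{boundary}), associated to the $S$-matrix $S^{\dag}$ and used in \cite{Ali2,Lee2}) is carried precisely onto (\ref{boundary}). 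Likewise $P_Y(X;0)=\delta_Y(X)$ for $X$ in the PushASEP physical region is equivalent to $P^{\mathrm{MADM}}_{\tilde Y}(\tilde X;0)=\delta_{\tilde Y}(\tilde X)$. Hence Theorem~\ref{theorem1}, applied to $(\tilde X,\tilde Y)$, shows at once that (\ref{transitionPush}) satisfies the master equation, boundary condition, and initial condition of the two-sided PushASEP, which proves the corollary.

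Alternatively — the route matching the remark after Theorem~\ref{theorem1} — one can verify directly, exactly as in \cite{TW1}, that (\ref{transitionPush}) satisfies the bulk equation and the PushASEP boundary condition, and then invoke only the initial-condition part: at $t=0$ the formula reduces, via the same change of variables, to the integral for the identity permutation, while the vanishing of the sum over $\sigma \neq \mathrm{id}$ is precisely identity (\ref{Isigma}) for the shifted data $(\tilde X,\tilde Y)$, already established inside the proof of Theorem~\ref{theorem1}.

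The main obstacle is the bookkeeping in the first route: one must check carefully that the pushing boundary condition of the two-sided PushASEP genuinely transforms into (\ref{boundary}) under $\tilde x_i = x_i - i$, i.e. that the correspondence between the two models is exact on the boundary of the physical region and not merely in the interior. This is the mapping alluded to at the beginning of Section~\ref{push}; once it is spelled out, everything else is the elementary exponent arithmetic displayed above together with a direct appeal to Theorem~\ref{theorem1}.
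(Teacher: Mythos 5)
Your proposal is correct and takes essentially the same route as the paper: the paper likewise uses Lemma 3 of \cite{Lee2} to absorb the factors $\xi_{\sigma(i)}^{\sigma(i)-i}$ into the exponents, reducing (\ref{transitionPush}) to the MADM formula at the shifted configurations $(x_i-i)_i$, $(y_i-i)_i$, and its check that the modified exponents (e.g.\ $x_2-x_1+y_2-y_1-2$) are nonnegative in the strict-inequality physical region is exactly your observation that $\tilde X,\tilde Y$ lie in the MADM physical region. In particular your second route (verify bulk and boundary as in \cite{TW1}, then invoke the vanishing identity (\ref{Isigma}) for the shifted data) coincides with the paper's proof.
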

\begin{proof}
The integrand of (\ref{transitionPush}) is equal to
\begin{equation*}
A_{\sigma}\prod_i^N \Big(\xi_{\sigma(i)}^{x_i - y_{\sigma(i)}-1 + \sigma(i) -i}e^{\varepsilon(\xi_i)t}\Big)
\end{equation*}
by Lemma 3 in \cite{Lee2} where $A_{\sigma}$ is the  coefficient of the Bethe ansatz solution of the MADM. Since the only difference between (\ref{transition}) and (\ref{transitionPush}) is the exponent of $\xi_{\sigma(i)}$, we simply modify the proof for the MADM.
The exponent of $\xi_1$ in (\ref{Forpush}) should be replaced by $x_2 - x_1 + y_2-y_1 -2$ which is nonnegative in the physical region of the \textit{two-sided} PushASEP, and  the exponent of $\xi_q$ in (\ref{eqForPush}) should be replaced by
$$x_{\sigma^{-1}(q) }- x_k + y_K - y_q + q - \sigma^{-1}(q) - K+k$$
which is also nonnegative since $x_{\sigma^{-1}(q) }- x_k \geq {\sigma^{-1}(q) }- k $ and $y_K - y_q \geq K-q$ in the physical region of the \textit{two-sided} PushASEP. The rest of the proof is followed by the same argument as in the proof of the MADM.
\end{proof}
\noindent From the  alternate form of (\ref{transitionPush})
\begin{equation*}
P_Y(X;t) = \sum_{\sigma \in \mathbb{S}_N}\Big(\frac{1}{2\pi i}\Big)^N\int_{\mathcal{C}_{R_N}}\cdots\int_{\mathcal{C}_{R_1}}A_{\sigma}\prod_i^N \Big(\xi_{\sigma(i)}^{x_i -i-( y_{\sigma(i)} - \sigma(i))-1}e^{\varepsilon(\xi_i)t}\Big)~d\xi_1\cdots \xi_N,\label{transitionPush2}
\end{equation*}
it is confirmed that  the configuration $(x_i)_i$ in the \textit{two-sided} PushASEP is mapped to the configuration $(x_i-i)_i$ in the MADM. Hence, we immediately obtain $\mathbb{P}(x_m(t) = x)$ for the \textit{two-sided} PushASEP  by this mapping and (\ref{maineq}). That is, in the \textit{two-sided} PushASEP,
\begin{eqnarray}
& & \mathbb{P}(x_m(t) =x ) \label{maineqPush2}\\
& & \hspace{0.5cm}=~
 \sum_{|S| \geq m}c_S \Big(\frac{1}{2\pi i}\Big)^k
\int_{\mathcal{C}_{R_{s_k}}}\cdots\int_{\mathcal{C}_{R_{s_1}}}I(\xi;s_1,\cdots,s_k)\prod_{s\in S}\xi_{s}^{x-(y_s-s)-1}e^{\varepsilon(\xi_s)t}~
d\xi_{s_1}\cdots d\xi_{s_k}. \nonumber
\end{eqnarray}
\\
In the totally asymmetric limit $\lambda \rightarrow 0$,  only the subset $S= \{1,\cdots, m\}$ that  satisfies
\begin{equation*}
 \frac{m(m-1)}{2} + \Sigma[S] - m|S| =0
\end{equation*}
survives in (\ref{maineqPush2}). Hence, in this limit, we have
\begin{eqnarray*}
& & \mathbb{P}(x_m(t) =x ) \\
& & \hspace{0.5cm}=~
 \Big(\frac{1}{2\pi i}\Big)^k
\int_{\mathcal{C}_{R_{m}}}\cdots\int_{\mathcal{C}_{R_{1}}}I(\xi;1,\cdots,m)\prod_{i=1}^m\xi_{i}^{x-(y_i-i)-1}e^{\varepsilon(\xi_s)t}~
d\xi_{1}\cdots d\xi_{m}.
\end{eqnarray*}
This result tells us that the probability for the $m$th leftmost particle's position is determined by initial positions of the only first $m$ particles. This makes sense physically because in the  limit $\lambda \rightarrow 0$, particles are governed by the TASEP dynamics in the left direction but freely jump to the right neighboring site by pushing other particles if they occupy right sites.
\\ \\
\noindent\textbf{Acknowledgement} \\
 This work was supported by European Research Council.

\end{document}